\renewcommand\@oddfoot{
	\hfil
	\rlap{%
		\vtop{%
			\vskip10mm
			\colorbox[rgb]{0.99,0.78,0.07}
			{\@tempdima\evensidemargin
				\advance\@tempdima1in
				\advance\@tempdima\hoffset
				\hb@xt@\@tempdima{%
					\textcolor{darkgray}{\normalsize\sffamily
						\bfseries\quad
						\expandafter\textsolittle\expandafter{
							arXiv.org}}%
					\strut\hss}}}}
}
\providecommand{\lowerboundHc}{lowerboundHc}
\providecommand{\upperboundHc}{upperboundHc}
\providecommand{\upperboundHcOnotation}{upperboundHcOnotation}
\providecommand{\upperboundHcYao}{upperboundHcYao}
\newcommand{\laterdef}[2]{%
	\protected@write\@auxout{}{\gdef\string#1{#2}}}
\definecolor{hellblau}{rgb}{0.6,0.6,0.8}
\newcommand{\cmt}[1]{\textcolor{hellblau}{\begin{array}{l}\text{#1}\end{array}\qquad}}
\newcommand{\ie}{that is,\xspace}
\newcommand{\algA}{\ensuremath{\textup{\textsc{Alg}}}\xspace}
\DeclareMathOperator{\cost}{cost}
\newcommand{\N}{\mathbb{N}}
\newcommand{\R}{\mathbb{R}}
\renewcommand{\P}{\mathbb{P}}
\newcommand{\Sn}{\mathcal{S}_n}
\newcommand{\euler}{\textnormal{\textrm{e}}}
\crefname{section}{Section}{Sections}
\crefname{lemma}{Lemma}{Lemmas}
\crefname{figure}{Figure}{Figures}
\crefname{theorem}{Theorem}{Theorems}
\crefname{definition}{Definition}{Definitions}
\crefname{corollary}{Corollary}{Corollaries}
\crefname{equation}{Equation}{Equations}
\let\leftold\left
\let\rightold\right
\renewcommand{\left}{\mathopen{}\mathclose\bgroup\leftold}
\renewcommand{\right}{\aftergroup\egroup\rightold}
\title{\boldmath Bounds for $c$-Ideal Hashing}
\titlerunning{\boldmath Bounds for $c$-Ideal Hashing}
\author{Fabian Frei}{Department of Computer Science, ETH Zürich}{fabian.frei@inf.ethz.ch}{https://orcid.org/0000-0002-1368-3205}{\thanks{Work done in part during a stay at Hosei University, supported by grant GR20109 by the Swiss National Science Foundation (SNSF) and the Japan Society for the Promotion of Science (JSPS).}}
\author{David Wehner}{Department of Computer Science, ETH Zürich}{david.wehner@inf.ethz.ch}{https://orcid.org/0000-0003-0201-4898}{}
\keywords{Hashing, Ideal Hash Families, Splitter, Perfect Hashing, Poissonization, Online Problem, Competitivity, Advice Complexity}
\authorrunning{F.\,Frei and D.\,Wehner}
\begin{document}

\maketitle

\begin{abstract}
In this paper, we analyze hashing from a worst-case perspective. To this end, we study a new property of hash families that is strongly related to $d$-perfect hashing, namely $c$-ideality. On the one hand, this notion generalizes the definition of perfect hashing, which has been studied extensively; on the other hand, it provides a direct link to the notion of $c$-approximativity . We focus on the usually neglected case where the average load $\alpha$ is at least $1$ and prove upper and lower parametrized bounds on the minimal size of $c$-ideal hash families.

As an aside, we show how $c$-ideality helps to analyze the \emph{advice complexity} of hashing. The concept of advice, introduced a decade ago, lets us measure the information content of an online problem. We prove hashing's advice complexity to be linear in the hash table size.
\end{abstract}

\section{Introduction}
Say you wake up one morning in a hotel with the desire to stroll around and view some sights of the city. 
Considering your peculiar proclivity towards paranoia, you want to know for every car you see whether you have seen this car before.
Whenever you see a car, you can quickly jot down the car plate and while doing so, you can check whether you have noted this number before.
You have to be quick though, since cars might follow each other closely and you will not have the time to go through your entire list to check whether a new car plate is in the list.
There are many possible car plates, maybe around $36^8$ for 8-character plates.
You estimate to see around 100 cars this day. You are not able to compare the new plate against more than $5$ car plates jotted down before. This is a situation where hashing excels.
However, this paper is not about normal hashing.

Luckily, you have heard of hashing and you know that when you first apply a hash function to the car plates---for example by choosing the last digit of the car plate---and then sort the strings by their hash value, you will be able to determine very quickly whether you have seen a given number before. But you are paranoid. Very paranoid. You know that you will only be able to quickly know you have seen a number before \emph{on average}.
You strive for perfection, however. Your hash function has to be good in the worst case as well.
You want to ensure you apply a hash function such that no hash value appears more than, say, $5$ times. You know this is not possible without further aid, but fret not, you have a very powerful friend who, provided you present him a small family of hash functions, is miraculously able to point you to the hash function among the family that achieves your goal---if there is a good hash function among your family, that is. And if the family you present is small. But, you wonder, does such a small family of hash functions exist at all? This paper deals with that question.

Hashing is one of the most popular tools in computing, both from a practical and a theoretical perspective.
Hashing was invented as an efficient method to store and retrieve information.
Its success began at latest in 1968 with the seminal paper ``Scatter Storage Techniques'' by Robert Morris~\cite{morris}, which was reprinted in 1983 with the following laudation~\cite{morris_reprint}:
\begin{quote}
	From time to time there is a paper which summarizes an emerging research area, focuses the central issues, and brings them to prominence. Morris's paper is in this class. [$\ldots$] It brought wide attention to hashing as a powerful method of storing random items in a data structure with low average access time.
\end{quote}
This quote illustrates that the practical aspects were the focus of the early research, and rightly so.
Nowadays, hashing has applications in many areas in computer science.
Before we describe how hashing can be applied, we state the general setting.

\subsection{General Setting and Notation}
We have a large set $U$, the \emph{universe}, of all possible elements, the \emph{keys}.
In our example, this would be the set of all possible car plates, so all strings of length $8$ over the alphabet $\{A,\ldots,Z, 0,\ldots,9\}$.
Then, there is a subset $S \subseteq U$ of keys from the universe. This set stands for the unknown elements that appear in our application. In our example, $S$ corresponds to the set of all car plates that we see on this day.
Then, there is a small set $T$, the \emph{hash table}, whose $m$ elements are called \emph{cells} or \emph{slots}. 
In our example, $T$ corresponds to our notebook and we organize the entries in our notebook according to the last character on the car plate, so each cell corresponds to a single letter of the alphabet.
Typically, the universe is huge in comparison to the hash table, for instance, $|U|=2^{|T|}$.
Every function from $U$ to $T$ is called a \emph{hash function}; it \emph{hashes} (i.e.,~maps) keys to cells.
We have to choose a function $h$ from $U$ to $T$ such that the previously unknown set $S$ is distributed as evenly as possible among the cells of our hash table.
For an introduction to hashing, one of the standard textbooks is the book by Mehlhorn~\cite{mehlhorn}.
We recommend the newer book by Mehlhorn and Sanders~\cite{mehlhorn_algorithmen}, and to German readers in particular its German translation by Dietzfelbinger, Mehlhorn and Sanders~\cite{dietzfelbinger_algorithmen}.

While $U$ and $T$ can be arbitrary finite sets, we choose to represent their elements by 
integers---which can always be achieved by a so-called pre-hash function---and
let $U\coloneqq\{1,\ldots,u\}$ and $T\coloneqq\{1,\ldots,m\}$.
For convenience, we abbreviate $\{1,\ldots,k\}$ by $[k]$ for any natural number $k$. 
We assume the size $n$ of the subset to be at least as large as the size of the hash table, that is, $|S|=n\ge m$ and, to exclude the corner case of hashing almost the entire universe, also $|U|\ge n^2$.

\subsection{Applications of Hashing}
Among the numerous applications of hashing, two broad areas stand out. First, as we have seen, hashing can be used to store and retrieve information. With hashing, inserting new information records, deleting records, and searching records can all be done in expected constant time, \ie, the number of steps needed to insert, find, or delete an information record does not depend on the size of $U$, $S$, or $T$.
For an example of such an application, consider dictionaries in Python, which are implemented using a hash table.
To create a dictionary, you specify pairs of keys (e.g., scientists) and values (e.g., their Erd\H{o}s number).
Now, retrieving values by calling the key, inserting new key-value-pairs, and deleting key-value-pairs can all be done in time independent of the number of scientists.
Another example is the Rabin-Karp algorithm~\cite{rabinkarp}, which searches a pattern in a text and can be used for instance to detect plagiarism in doctoral theses.
The naïve approach is to slide the pattern $p$, which has length $m$, letter by letter over the text $t$, which has length $n$, and then compare the pattern with each substring $s$ of length $m$ of the text.
Here, the universe $U$ consists of all possible texts (in a certain language) and $S$ are all texts of length $n$.
This needs $n-m+1$ comparisons between pattern and substrings and for each comparison, $m$ letters have to be compared. 
In total, there are $n(n-m+1)$ comparisons, which is a large computational effort.
A much faster approach in expectation is to use the Rabin-Karp algorithm.
Here, the hash values of the pattern and all substrings of length are computed first.
For a hash function with the property that the hash of a string, say $s_2 = 23456$, can easily be derived from the hash of the  previous string, say $s_1 = 12345$, this is computationally not too expensive.
Then, the hash value of the pattern is compared to the hash value of each substring.
So far, this is not an improvement. Now, however, the pattern and a substring are only compared if their hash values match.
This leads to an average time complexity of $O(n+m)$, which is much better than the previous $O(nm)$.

The second main application area is cryptography.
In cryptographic hashing, the hash function has to fulfill additional requirements, for example that it is computationally very hard to reconstruct a key $x$ from its hash $h(x)$ or to find a colliding key, \ie, a $y\in U$ with $h(y)=h(x)$.
Typical use cases here are, for example, digital signatures and famous examples of cryptographic hash functions are the checksum algorithms such as MD5 or SHA, which are often used to check whether two files are equal.
Moreover, and perhaps particularly interesting for computer science, hashing is a useful tool in the design of randomized algorithms and their derandomization. Further examples and more details can be found for example in the useful article by Luby and Wigderson~\cite{derandomization}.

\subsection{Theory of Hashing}
Soon after the early focus on the practical aspect, a rich theory on hashing evolved.
In this theory, randomization plays a pivotal role.
From a theory point of view, we aim for a hash function $h$ that reduces collisions among the yet-to-be-revealed keys of $S$ to a minimum.
One possibility for selecting $h$ is to choose the image for each key in $U$ uniformly at random among the $m$ cells of $T$. We can interpret this as picking a random $h$ out of the family $\mathcal{H}_\text{all}$ of all hash functions.\footnote{Note that the number of hash functions, $|\mathcal{H}_\text{all}|=m^u$, is huge.} 
Then the risk that two keys $x,y\in U$ collide is only $1/m$, that is,
\[\forall x,y\in U,\,x\neq y\colon\ \underset{h\in\mathcal{H}_\text{all}}{\P}\!\left(h(x)=h(y)\right)\,=\frac1m.\]
On the downside, this random process can result in computationally complex hash functions whose evaluation has space and time requirements in $\Theta(u\ln m)$. 
Efficiently computable functions are necessary in order to make applications feasible, however. 
Consequently, the assumption of such a simple uniform hashing remains in large part a theoretical one with little bearing on practical applications; it is invoked primarily to simplify theoretical manipulations. 

The astonishing discovery of small hashing families that can take on the role of $\mathcal{H}_\text{all}$ addresses this problem. 
In their seminal paper in 1979, Larry Carter and Mark Wegman~\cite{carter} introduced the concept of \emph{universal hashing families} and showed that there exist small universal hashing families.
A family $\mathcal{H}$ of hash functions is called \emph{universal} if it can take the place of $\mathcal{H}_\text{all}$ without increasing the collision risk:
\[\mathcal{H}\text{ is universal}\ \iff\ \forall x,y\in U,\,x\neq y\colon\ \underset{h\in\mathcal{H}}{\P}(h(x)=h(y))\le 1/m.\]
In the following years, research has been successfully dedicated to revealing universal hashing families of comparably small size that exhibit the desired properties.

\subsection{Determinism versus Randomization}
Deterministic algorithms cannot keep up with this incredible performance of randomized algorithms,
\ie, as soon as we are forced to choose a single hash function without knowing $S$, there is always a set $S$ such that all keys are mapped to the same cell.
Consequently, deterministic algorithms have not been of much interest.
Using the framework of advice complexity, we measure how much additional information deterministic algorithms need in order to hold their ground or to even have the edge over randomized algorithms.

In order to analyze the advice complexity of hashing, we have to view hashing as an online problem.
This is not too difficult; being forced to take irrevocable decisions without knowing the whole input is the essence of online algorithms, as we describe in the main introduction of this thesis.
In the standard hashing setting, we are required to predetermine our complete strategy, \ie our hash function $h$, with no knowledge about $S$ at all.
In this sense, standard hashing is an ultimate online problem.
We could relax this condition and require that $S=\{s_1,\ldots,s_n\}$ is given piecemeal and the algorithm has to decide to which cell the input $s_i$ is mapped only upon its arrival.
However, this way, an online algorithm could just distribute the input perfectly, which would lead, as in the case of $\mathcal{H}_{all}$, to computationally complex functions.

Therefore, instead of relaxing this condition, we look for the reason why there is such a gap between deterministic and randomized strategies. 
The reason is simple: Deterministic strategies are measured according to their performance in the worst case, whereas randomized strategies are measured according to their performance on average.
However, if we measure the quality of randomized strategies from a worst-case perspective, the situation changes.
In particular, while universal hashing schemes prove incredibly useful in everyday applications, they are far from optimal from a worst-case perspective, as we illustrate now. 

We regard a hash function $h$ as an algorithm operating on the input set of keys $S$.
We assess the performance of $h$ on $S$ by the resulting \emph{maximum cell load}: $\cost(h,S)\coloneqq\alpha_\textnormal{max}\coloneqq\max\{\alpha_1,\ldots,\alpha_m\}$, where $\alpha_i$ is the \emph{cell load} of cell $i$, that is, $\alpha_i\coloneqq \left|h^{-1}(i)\cap S\right|$, the number of keys hashed to cell $i$. This is arguably the most natural cost measurement for a worst-case analysis. Another possibility, the total number of collisions, is closely related.\footnote{This number can be expressed as $\sum_{k=1}^m\binom{\alpha_k}{2}\in\Theta\left(\alpha_1^2+\ldots+\alpha_m^2\right)\subseteq\mathcal{O}\left(m\cdot\alpha_\textnormal{max}^2\right)$.}

The \emph{average load} is no useful measurement option as it is always $\alpha \coloneqq n/m$.
The worst-case cost $n$ occurs if all $n=|S|$ keys are assigned to a single cell. The optimal cost, on the other hand, is $\left\lceil n/m\right\rceil = \left\lceil\alpha\right\rceil$ and it is achieved by distributing the keys of $S$ into the $m$ cells as evenly as possible.

Consider now a randomized algorithm \algA that picks a hash function $h\in \mathcal H_\text{all}$ uniformly at random. A long-standing result \cite{gonnet81}, proven nicely by Raab and Steger~\cite{ballsintobins}, shows the expected cost $\mathbb{E}_{h\in\mathcal{H}_\text{all}}[\cost(h,S)]$ to be in
$\Omega\left(\frac{\ln m}{\ln\ln m}\right)$,
as opposed to the optimum $\lceil\alpha\rceil$. 
The same holds true for smaller universal hashing schemes such as polynomial hashing, where we randomly choose a hash function $h$ out of the family of all polynomials of degree $\mathcal{O}\left(\frac{\ln n}{\ln\ln n}\right)$.
However, no matter which universal hash function family is chosen, it is impossible to rule out the absolute worst-case: For every chosen hash function $h$ there is, due to $u\ge n^2$, a set $S$ of $n$ keys that are all mapped to the same cell.

\subsection{Our Model and the Connection to Advice Complexity}\label{advicecomplexity}
We are, therefore, interested in an alternative hashing model that allows for meaningful algorithms better adapted to the worst case: What if we did not choose a hash function at random but could start with a family of hash functions and then always use the best function, for any set $S$? What is the trade-off between the size of such a family and the upper bounds on the maximum load they can guarantee? In particular, how large is a family that always guarantees an almost optimal load?

In our model, the task is to provide a set $\mathcal{H}$ of hash functions. This set should be small in size while also minimizing the term
\[\cost(\mathcal H,\mathcal S)\coloneqq\max_{S\in\mathcal{S}}\min_{h\in\mathcal{H}}\cost(h,S),\]
which is the best cost bound that we can ensure across a given set $\mathcal{S}$ of inputs by using any hash function in $\mathcal{H}$.
Hence, we look for a family $\mathcal{H}$ of hash functions that minimizes both $|\mathcal{H}|$ and $\cost(\mathcal H,\mathcal S)$. These two goals conflict, resulting in a trade-off, which we parameterize by $\cost(\mathcal H,\mathcal S)$, using the notion of \emph{$c$-ideality}.

\begin{definition}[$\bm c$-Ideality]\label{def:c-ideal}
	Let $c\ge 1$. A function $h:U\to T$ is called \emph{$c$-ideal for a subset $S\subseteq U$} of keys if $\cost(h,S)=\alpha_\textnormal{max}\le c\alpha$.
	In other words, a $c$-ideal hash function $h$ assigns at most $c\alpha$ elements of $S$ to each cell of the hash table $T$.
	
	Similarly, a family of hash functions $\mathcal H$ is called \emph{$c$-ideal for a family $\mathcal S$ of subsets of $U$} if, for every $S \in \mathcal S$, there is a function $h \in \mathcal H$ such that $h$ is $c$-ideal for $S$. This is equivalent to
	$\cost(\mathcal H,\mathcal S) \le c\alpha$.
	If $\mathcal H$ is $c$-ideal for $\mathcal S_n \coloneqq \{S \subseteq U;\ |S| = n\}$ (that is, all sets of $n$ keys), we simply call $\mathcal H$ $c$-ideal.
\end{definition}

We see that $c$-ideal families of hash functions constitute algorithms that guarantee an upper bound on $\cost(\mathcal H)\coloneqq\cost(\mathcal H,\mathcal S_n)$, which fixes one of the two trade-off parameters.
Now, we try to determine the other one and find, for every $c\ge1$, the minimum size of a $c$-ideal family, which we denote by 
\[H_c\coloneqq
\min\{|\mathcal{H}|;\ \cost(\mathcal H) \le c\alpha
\}.\]
Note that $c\ge m$ renders the condition of $c$-ideality void since $n=m\alpha$ is already the worst-case cost; we always have $\cost(\mathcal H)\le n \le c\alpha $ for $c\ge m$. Consequently, every function is $m$-ideal, every non-empty family of functions is $m$-ideal, and $H_m=1$.

With the notion of $c$-ideality, we can now talk about $c$-competitiveness of hashing algorithms.
Competitiveness is directly linked to $c$-ideality since, for each $S$, $\cost(\textsc{Alg}(S))\coloneqq \alpha_\textnormal{max}$ and the cost of an optimal solution is always $\cost(\textsc{Opt}(S))=\lceil \alpha \rceil$.
Therefore, a hash function that is $c$-ideal is $c$-competitive as well.
Clearly, no single hash function is $c$-ideal; however, a hash function family $\mathcal{H}$ can be $c$-ideal and an algorithm that is allowed to choose, for each $S$, the best hash function among a $c$-ideal family $\mathcal{H}$ is thus $c$-competitive.
Such a choice corresponds exactly to an online algorithm with advice that reads $\lceil\log(|\mathcal{H}|)\rceil$~advice bits, which is just enough to indicate the best function from a family of size~$|\mathcal{H}|$.
Since, by definition, there is a $c$-ideal hash function family of size~$H_c$, there is an online algorithm with advice complexity $\lceil\log(H_c)\rceil$ as well.
Moreover, there can be no $c$-competitive online algorithm \algA that reads less than $\lceil\log H_c\rceil$ advice bits: If there were such an algorithm, there would be a $c$-ideal hash function family of smaller size than $H_c$, which contradicts the definition of $H_c$.

We will discuss the relation between $c$-ideal hashing and
advice complexity in more depth in \cref{sec:adv_comp}.

\subsection{Organization}
This paper is organized as follows. In~\cref{sec:contribution}, we give an overview of related work and our contribution.
We present our general method of deriving bounds on the size of $c$-ideal families of hash functions in~\cref{sec:c_ideal}. 
\Cref{sec:estimationofalphamax} is dedicated to precisely calculating these general bounds. In~\cref{sec:edgecases}, we give improved bounds for two edge cases. 
In~\cref{sec:adv_comp}, we analyze the advice complexity of hashing. 
We recapitulate and compare our results in~\cref{hash:sec:conclusion}.

\section{Related Work and Contribution}\label{sec:contribution}
There is a vast body of literature on hashing.
Indeed, hashing still is a very active research area today and we cannot even touch upon the many aspects that have been considered.
Instead, in this section, we focus on literature very closely connected to $c$-ideal hashing.
For a coarse overview of hashing in general, we refer to the survey by Chi and Zhu~\cite{chi_survey} and the seminar report by Dietzfelbinger et al.~\cite{dietzfelbinger_Kurzuebersicht}.

The advice complexity of hashing has not yet been analyzed; however, $c$-ideality is a generalization of perfect $k$-hashing, sometimes also called $k$-perfect hashing.
For $n\le m$ (i.e., $\alpha\le1$) and $c=1$, our definition of $c$-ideality allows no collisions and thus reduces to perfect hashing, a notion formally introduced by Mehlhorn in 1984~\cite{mehlhorn}. 
For this case, Fredman and Komlós~\cite{fredman} proved in 1984 the bounds
\[H_1\in\Omega\left( \frac{m^{n-1}\log(u) (m-n+1)!} {m! \log(m-n+2)} \right) \text{ and } H_1\in\mathcal{O}\left(\frac{-n\log(u)}{\log\left(1-\frac{m!}{(m-n)!m^n}\right)}\right).\] 

In 2000, Blackburn~\cite{Blackburn} improved their lower bound for $u$ large compared to $n$.
Recently, in 2022, Guruswami and Riazonov~\cite{guruswami_perfect_hashing} improved their bound as well.
None of these proofs generalize to $n>m$, that is, $\alpha > 1$.

Another notion that is similar to $c$-ideality emerged in 1995.
Naor et al.~\cite{splitters} introduced $(u,n,m)$-splitters, which coincide with the notion of $1$-ideal families for $\alpha\ge 1$. They proved a result that translates to
\begin{equation}
H_1\in\mathcal{O}\left(\sqrt{2\pi\alpha}^m \euler^\frac{m}{12\alpha}\sqrt{n}\ln u\right). \label{Naorupperbound}
\end{equation}
Since the requirements for $c$-ideality are strongest for $c=1$, this upper bound holds true for the general case of $c\ge1$ as well.

We extend these three results to the general case of $n\ge m$ and $c\ge 1$. Moreover, we tighten the third result further for large $c$.
Specifically, we prove the following new bounds:
\begin{align}
H_c&\ge \lowerboundHc \label{equation_mainlowerbound} 
&&\text{(\Cref{cor:lowerbound_H_c})}\\
H_c&\ge \frac{\ln u - \ln{(c\alpha)}}{\ln m} \label{equation_easylowerbound} &&\text{(\Cref{lowerboundthm1})}\\
H_c&\le \upperboundHc \label{equation_mainupperbound} 
&&\text{(\Cref{cor:upperbound_H_c})}\\
H_c&\in\mathcal{O}\left(\frac{n\ln u}{\ln t}\right)\text{ for any $t\ge1$ and }c\in\omega\left(t\frac{\ln n}{\ln\ln n}\right) \label{equation_Yaobound}
&&\text{(\Cref{upperbound_yao})}
\end{align}
Note that (\ref{equation_mainupperbound}) coincides with (\ref{Naorupperbound}) for $c=1$ and is only an improvement for $\alpha$ and $c$ slightly larger than $1$. Since $H_c \le H_{c'}$ for $c\ge c'$, the bound still improves slightly upon (\ref{Naorupperbound}) in general, depending on the constants hidden in the O-notation in (\ref{Naorupperbound}). Interestingly, the size $u$ of the universe does not appear in (\ref{equation_mainlowerbound}); a phenomenon discussed in~\cite{fredman}. Fredman and Komlós used information-theoretic results based on the Hansel Lemma~\cite{Hansel} to obtain a bound that takes the universe size into account. Körner~\cite{Koerner} expressed their approach in the language of graph entropy.
It is unclear, however, how these methods could be generalized to the case $\alpha>1$ in any meaningful way. 

The straightforward approach of proving a lower bound on $H_1$ is to use good Stirling estimates for the factorials in $H_1 \ge \binom u n / \binom {u/m} \alpha^m$; see Mehlhorn~\cite{mehlhorn}. This yields a lower bound of roughly $\sqrt{2\pi\alpha}^{m-1}/\sqrt{m}$, which is better than (\ref{equation_mainlowerbound}) for $c=1$, see \cref{lem:straightforwardestimation_H_c} for details. Unfortunately, it turns out that we cannot obtain satisfying results for $c>1$ with this approach.
However, the results from Dubhashi and Ranjan~\cite{negativedependence} and the method of Poissonization enable us to circumnavigate this obstacle and derive \cref{equation_mainlowerbound}.

We use our bounds to derive bounds for the advice complexity of hashing, which we will discuss in \cref{sec:adv_comp}.

\section{General Bounds on $H_c$}\label{sec:c_ideal}
We present our bounds on $H_c$, which is the minimum size of $c$-ideal families of hash functions. First, we establish a general lower and upper bound.

We use a \emph{volume bound} to lower-bound $H_c$. We need the following definition. Let $M_c$ be the maximum number of sets $S \in \Sn$ that a single hash function can map $c$-ideally, \ie
\[M_c\coloneqq
\max_{h\in\mathcal{H}_\text{all}}|\{S\in\mathcal{S}_n;\ \alpha_\textnormal{max}\le c\alpha\}| = \max_{h\in\mathcal{H}_\text{all}}|\{S\in\mathcal{S}_n;\  \forall i \in [m]: \alpha_i\le c\alpha\}|.\]

\begin{lemma}[Volume Bound]\label{thm:volumebound}
	The number of hash functions in a family of hash functions that is $c$-ideal is at least the number of sets in $\mathcal{S}_n$ divided by the number of sets for which a single hash function can be $c$-ideal, \ie \[H_c\ge |\mathcal{S}_n| /M_c.\]
\end{lemma}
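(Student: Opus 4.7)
The plan is to prove this by a standard covering/union-bound argument. Take any $c$-ideal family $\mathcal{H}$; by minimality it suffices to show that $|\mathcal{H}| \ge |\mathcal{S}_n|/M_c$, for then the same inequality holds for the smallest such family, namely $H_c$.

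The first step is to assign, to each hash function $h \in \mathcal{H}$, the set of inputs it handles well: define
\[\mathcal{S}_n(h) \coloneqq \{S \in \mathcal{S}_n;\ h \text{ is } c\text{-ideal for } S\}.\]
By the very definition of $M_c$, each such set satisfies $|\mathcal{S}_n(h)| \le M_c$, since $M_c$ was chosen as the maximum over \emph{all} hash functions in $\mathcal{H}_\text{all}$, which contains $\mathcal{H}$.

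The second step is to exploit $c$-ideality of $\mathcal{H}$: by \cref{def:c-ideal}, for every $S \in \mathcal{S}_n$ there exists some $h \in \mathcal{H}$ that is $c$-ideal for $S$, so $S \in \mathcal{S}_n(h)$. Hence the sets $\mathcal{S}_n(h)$ cover $\mathcal{S}_n$, and applying a plain union bound yields
\[|\mathcal{S}_n| = \left|\bigcup_{h \in \mathcal{H}} \mathcal{S}_n(h)\right| \le \sum_{h \in \mathcal{H}} |\mathcal{S}_n(h)| \le |\mathcal{H}| \cdot M_c.\]
Rearranging gives $|\mathcal{H}| \ge |\mathcal{S}_n|/M_c$, and since this applies to every $c$-ideal family, we conclude $H_c \ge |\mathcal{S}_n|/M_c$.

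There is really no obstacle here; the proof is essentially a textbook pigeonhole/covering argument and the lemma is meant as a tool for the subsequent, more involved estimates of $|\mathcal{S}_n|$ and $M_c$. The only minor subtlety is to note that the maximum in the definition of $M_c$ ranges over all of $\mathcal{H}_\text{all}$, which guarantees the bound $|\mathcal{S}_n(h)| \le M_c$ uniformly for every $h$ we might encounter in any candidate family $\mathcal{H}$.
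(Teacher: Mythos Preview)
Your proof is correct and follows essentially the same approach as the paper: both argue that each hash function can cover at most $M_c$ sets in $\mathcal{S}_n$, so a $c$-ideal family of size $|\mathcal{H}|$ covers at most $|\mathcal{H}|\cdot M_c$ sets, forcing $|\mathcal{H}|\ge |\mathcal{S}_n|/M_c$. The only difference is cosmetic---you introduce the notation $\mathcal{S}_n(h)$ and phrase the counting as an explicit union bound, whereas the paper states the inequality $|\mathcal{H}|\cdot M_c\ge |\mathcal{S}_n|$ directly.
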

\begin{proof}
	$|\mathcal{S}_n|=\binom{u}{n}$ is the number of subsets of size $n$ in the universe $U$ and $M_c$ is the maximum number of such subsets for which a single function $h$ can be $c$-ideal. A function family $\mathcal{H}$ is thus $c$-ideal for at most $|\mathcal{H}|\cdot M_c$ of these subsets. If $\mathcal{H}$ is supposed to be $c$-ideal for all subsets---that is, to contain a $c$-ideal hash function for every single one of them---
	we need $|\mathcal{H}|\cdot M_c\ge |\mathcal{S}_n|$.
\end{proof}

To be able to estimate $M_c$, we consider hash functions that distribute the $u$ keys of our universe as evenly as possible:
\begin{definition}[Balanced Hash Function] \label{def:almostequidistributing}
	A \emph{balanced hash function} $h$ partitions the universe into $m$ parts by allotting to each cell $\lceil u/m\rceil$ or $\lfloor u/m\rfloor$ elements.
	We denote the set of balanced hash functions by $\mathcal{H}_\mathrm{eq}$. We write $h_\mathrm{eq}$ to indicate that $h$ is balanced.\footnote{Note that, for any $h_\mathrm{eq}$, there are exactly ($u\!\!\mod m$) cells of size $\lceil {u/m} \rceil$ since $\sum_{k=1}^m |h_\mathrm{eq}^{-1}(k)| = u$.}
\end{definition}

\Cref{thm:equidistribution} states that exactly all balanced hash functions attain the value $M_c$. Therefore, we can limit ourselves to such functions.

\begin{theorem}[Balance Maximizes $M_c$]\label{thm:equidistribution}
	A function is $c$-ideal for the maximal number of subsets if and only if it is balanced.
	In other words, the number of subsets that are hashed $c$-ideally by a hash function equals $M_c$ if and only if $h$ is balanced.
\end{theorem}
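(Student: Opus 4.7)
The plan is to show that whenever $h$ is not balanced, I can construct a strictly more balanced function $h'$ with $M(h') \ge M(h)$, where $M(h) \coloneqq |\{S \in \Sn : h \text{ is } c\text{-ideal for } S\}|$. Each such improvement step strictly decreases the imbalance measure $\sum_k |h^{-1}(k)|^2$, so iterating terminates at a balanced $h^*$ with $M(h^*) \ge M(h)$; thus every balanced function attains $M_c$. The reverse direction follows from a strict version of the local comparison combined with the permutation-invariance of $M$ under cell relabelings, which forces all balanced functions to share the same $M$-value.

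For the local improvement, suppose $|h^{-1}(i)| \ge |h^{-1}(j)| + 2$ for some cells $i \ne j$. I pick any $x \in h^{-1}(i)$ and define $h'$ by $h'(x) \coloneqq j$ and $h'(y) \coloneqq h(y)$ for $y \ne x$. Writing $P_k \coloneqq h^{-1}(k)$ and $b \coloneqq \lfloor c\alpha \rfloor$, I set
\begin{align*}
A &\coloneqq \{S \in \Sn : h \text{ is } c\text{-ideal for } S,\ h' \text{ is not}\},\\
B &\coloneqq \{T \in \Sn : h' \text{ is } c\text{-ideal for } T,\ h \text{ is not}\}.
\end{align*}
Since the swap only changes $\alpha_i$ and $\alpha_j$, a direct case analysis yields
\begin{align*}
A &= \{S : x \in S,\ \alpha_j(h,S) = b,\ \alpha_k(h,S) \le b \text{ for all } k\},\\
B &= \{T : x \in T,\ \alpha_i(h,T) = b+1,\ \alpha_j(h,T) \le b-1,\ \alpha_k(h,T) \le b \text{ for } k \ne i\}.
\end{align*}
Because $M(h') - M(h) = |B| - |A|$, the local claim reduces to proving $|B| \ge |A|$.

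To establish this inequality, I double-count. Partition $A = \bigsqcup_\beta A_\beta$ according to $\beta \coloneqq \alpha_i(h, S)$ and $B = \bigsqcup_\gamma B_\gamma$ according to $\gamma \coloneqq \alpha_j(h, T)$; the constraints force the correspondence $\gamma = \beta - 1$. Setting $k \coloneqq b+1-\beta$, I count pairs $(S, T) \in A_\beta \times B_{\beta-1}$ connected by the swap $T = (S \setminus D) \cup E$ with $D \subseteq P_j \cap S$ and $E \subseteq P_i \setminus S$ both of size $k$. Each $S \in A_\beta$ yields $\binom{b}{k}\binom{|P_i|-\beta}{k}$ such pairs, and conversely each $T \in B_{\beta-1}$ is reached by $\binom{b}{k}\binom{|P_j| - \beta + 1}{k}$ pairs (running the swap backwards: choose which $k$ of the $b$ elements in $(P_i \cap T) \setminus \{x\}$ to remove and which $k$ elements of $P_j \setminus T$ to insert). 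Double counting combined with the inequality $|P_i| - \beta > |P_j| - \beta + 1$, which follows from $|P_i| \ge |P_j| + 2$, then yields $|B_{\beta-1}| \ge |A_\beta|$, and summing over $\beta$ gives $|B| \ge |A|$.

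The main obstacle I expect is handling the degenerate edge cases cleanly: for instance, when $|P_j| < b$ the binomial $\binom{|P_j|-\beta+1}{k}$ vanishes and $A_\beta = 0$ automatically, while when $c \ge m$ every subset is $c$-ideal for every $h$ and the ``only if'' part is vacuous. For the strict version needed for the ``only if'' direction, the binomial comparison is strict whenever $|A_\beta| > 0$; existence of an $S \in A$ can be established by explicit construction as soon as $mb \ge n$ (which holds since $c \ge 1$) and the cells outside $\{i, j\}$ have enough combined capacity, with the remaining pathological configurations requiring direct verification.
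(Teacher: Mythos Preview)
Your proposal is correct and follows the same overall strategy as the paper: pick two cells whose preimage sizes differ by at least~$2$, reassign one universe element from the larger to the smaller, and show that the number of $c$-ideally hashed $n$-subsets does not decrease; iterate until a balanced function is reached. Both you and the paper also rely (you explicitly, the paper implicitly) on the fact that $M(h)$ depends only on the multiset of preimage sizes, so all balanced functions share the same value.

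Where you diverge is in the proof of the key local inequality $|B|\ge|A|$. The paper conditions on the portion $S_1$ of $S$ lying outside the two affected cells and on $k=|S|-|S_1|$, and for each such slice writes the gain minus the loss as the single closed form
\[
\binom{\beta_2-1}{c\alpha}\binom{\beta_1}{k-c\alpha-1}-\binom{\beta_1}{c\alpha}\binom{\beta_2-1}{k-c\alpha-1},
\]
which it verifies is nonnegative by a short algebraic rearrangement. Your swap-based double count between $A_\beta$ and $B_{\beta-1}$ is a valid alternative that makes the combinatorial reason for the inequality (more room to swap into the larger preimage than into the smaller one) more transparent, at the cost of a bit more bookkeeping and the edge cases you already flag. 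Neither argument spells out the strict ``only if'' direction with full rigor, but your outline for obtaining strictness via a nonempty $A$ and a strict binomial comparison is sound.
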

\begin{proof}
	For every hash function, the number of keys that are allotted---that is, potentially hashed---to cell $k$ is $\beta_k\coloneqq |h^{-1}(k)|$. This gives us a decomposition $(\beta_1,\ldots,\beta_m)$ of our universe $U$.
	Note that $\beta_1+\ldots+\beta_m=u$ and remember that $h$ is by definition balanced if and only if $\beta_1,\ldots,\beta_m\in\{\lfloor\beta\rfloor,\lceil\beta\rceil\}$, where we use $\beta \coloneqq u/m$. Note that this is in turn equivalent to the condition that $\beta_1,\ldots,\beta_m$ differ pairwise by at most 1.\\
	Let $M_c(\beta_1,\ldots,\beta_m)$ be the number of subsets $S \subseteq U$ for which a hash function $h$ with a decomposition $(\beta_1,\ldots,\beta_m)$ is $c$-ideal.
	Assume that $h$ is not balanced.
	Then, there are $i,j$ such that $\beta_i\le \beta_j+2$.
	Assume without loss of generality that $\beta_1\le \beta_2+2$.
	Choose any element $a$ from $\beta_2$.
	Let $h'$ be the function with $h'(s)=h(s)$ for $s\neq u$ and $h'(a)=1$, that is, $h'$ behaves like $h$ except for mapping $a$ to cell $1$ instead of cell $2$. This function has the decomposition $(\beta_1+1, \beta_2-1, \beta_3,\ldots,\beta_m)$ and is thus more balanced than $h$.
	We show that $h'$ is $c$-ideal for more subsets $S \subseteq U$ than $h$.
	Thus, we show that by moving towards a more balanced distribution, more sets can be mapped $c$-ideally.
	
	Consider any set $S_1 \subset U$ of $n-k$ elements from $U$ with $k\le 2c\alpha$ that does not contain any element from $h^{-1}(1)$ or $h^{-1}(2)$.
	Let $r$ and $r'$ be the number of sets $S_2 \subset h^{-1}(1) \cup h^{-1}(2)$ of size $k$ such that $S=S_1 \cup S_2$ is mapped $c$-ideally by $h$ and $h'$, respectively. We analyze $r'-r$, that is, by how much the number of $c$-ideally mapped sets increases when $a$ is mapped to the first cell instead of the second one. On the one hand, 
	any set $S_2$ containing $a$ and exactly $c\alpha$ elements (different from $a$) from $h^{-1}(2)$ is mapped $c$-ideally by $h'$ but not by $h$. 
	There are $\binom{\beta_2-1}{c\alpha}\binom{\beta_1}{k-c\alpha-1}$ such sets. 
	On the other hand, for any set $S_2$ containing $a$ and exactly $c\alpha$ elements from $h^{-1}(1)$ is not mapped $c$-ideally by $h'$ although it was by $h$. There are $\binom{\beta_1}{c\alpha}\binom{\beta_2-1}{k-c\alpha-1}$ such sets. 
	Any set not containing $a$ is of course hashed identically by $h$ and $h'$. 
	Hence, $r'-r$ is exactly
	\[\binom{\beta_2-1}{c\alpha}\binom{\beta_1}{k-c\alpha-1} - \binom{\beta_1}{c\alpha}\binom{\beta_2-1}{k-c\alpha-1}.\]
	Writing $s=\beta_2-\beta_1-1$ and $x=c\alpha$ and $y=k-c\alpha-1$, we find that this expression is larger than or equal to $0$ if and only if any of the following equivalent conditions holds.
	\begin{align*}
	\binom{\beta_2-1}{c\alpha}\binom{\beta_1}{k-c\alpha-1} &\ge \binom{\beta_1}{c\alpha}\binom{\beta_2-1}{k-c\alpha-1} \\
	\frac{(\beta_1+s)!}{x!(\beta_1+s-x)!}\frac{\beta_1!}{y!(\beta_1-y)!} &\ge
	\frac{\beta_1!}{x!(\beta_1-x)!}\frac{(\beta_1+s)!}{y!(\beta_1+s-y)!} \\
	\frac{(\beta_1+s-y)!}{(\beta_1-y)!} &\ge \frac{(\beta_1+s-x)!}{(\beta_1-x)!} \\
	(\beta_1+s-y)\ldots(\beta_1+s-y-s+1) &\ge (\beta_1+s-x)\ldots(\beta_1+s-x-s+1) \\
	x &\ge y \\
	c\alpha &\ge k-c\alpha-1 \ge 2c\alpha-c\alpha-1 \ge c\alpha-1.
	\end{align*}
	Therefore, whenever a hash function is not balanced, there is another hash function that hashes more sets $c$-ideally.		
\end{proof}

Now fix a balanced hash function $h \in \mathcal{H}_\mathrm{eq}$. 
We switch to a randomization perspective: Draw an $S \in\mathcal{S}_n$ uniformly at random. The cell loads $\alpha_k$ are considered random variables that assume integer values based on the outcome $S\in\mathcal{S}_n$.
The probability that our fixed $h$ hashes the random $S$ $c$-ideally is exactly
\[\mathbb{P}(\alpha_\textnormal{max} \le c\alpha) = \frac{|\{S\in \Sn ;\  \text{$h$ hashes $S$ $c$-ideally}\}|}{|\Sn|}=\frac{M_c}{|\mathcal{S}_n|}. \]

We suspend our analysis of the lower bound for the moment and switch to the upper bound to facilitate the comparison of the bounds.
\begin{lemma}[Probability Bound]\label{thm:probabilitybound2}
	We can bound the minimal size of a $c$-ideal family by
	\[H_c\le\left\lceil\frac{|\mathcal{S}_n|}{M_c}n\ln u\right\rceil.\]
\end{lemma}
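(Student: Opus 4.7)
The plan is to establish $H_c \le N$ for $N \coloneqq \lceil |\mathcal{S}_n|\,n\ln u/M_c\rceil$ via a standard probabilistic construction: sample $N$ balanced hash functions $h_1,\ldots,h_N$ independently and uniformly from $\mathcal{H}_\mathrm{eq}$ and show that with positive probability the resulting family is $c$-ideal for all of $\mathcal{S}_n$. Writing $p\coloneqq M_c/|\mathcal{S}_n|$, the crucial ingredient I need is that, for \emph{every} fixed $S\in\mathcal{S}_n$, the probability that a uniformly chosen $h\in\mathcal{H}_\mathrm{eq}$ is $c$-ideal for $S$ equals $p$. The displayed identity preceding the statement delivers this only for a \emph{random} $S$ and a \emph{fixed} $h$, so the other direction must be supplied by a symmetry-based double count.

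To transfer the probability, I would argue as follows. By \Cref{thm:equidistribution} every $h\in\mathcal{H}_\mathrm{eq}$ is $c$-ideal for exactly $M_c$ subsets, so the number of pairs $(h,S)\in\mathcal{H}_\mathrm{eq}\times\mathcal{S}_n$ with $h$ being $c$-ideal for $S$ equals $|\mathcal{H}_\mathrm{eq}|\cdot M_c$. The symmetric group on $U$ acts on functions by $\pi\cdot h\coloneqq h\circ \pi^{-1}$ and on subsets by $\pi\cdot S\coloneqq \pi(S)$; this action preserves balancedness (since $(\pi\cdot h)^{-1}(k)=\pi(h^{-1}(k))$), preserves $c$-ideality of the pair $(h,S)$ (since all cell loads are preserved), and is transitive on $\mathcal{S}_n$. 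Hence the inner count $|\{h\in\mathcal{H}_\mathrm{eq};\ h\text{ is $c$-ideal for }S\}|$ is independent of $S$, and combining the two counts yields exactly $p$ for each fixed $S$.

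With this in hand, the union bound finishes the job. For each fixed $S$, the probability that none of the $N$ independent balanced samples is $c$-ideal for $S$ is $(1-p)^N\le\euler^{-pN}$, so summing over all $|\mathcal{S}_n|$ subsets bounds the expected number of uncovered subsets by $|\mathcal{S}_n|\,\euler^{-pN}$. Plugging in $N\ge n\ln u/p$ and using $|\mathcal{S}_n|=\binom{u}{n}\le u^n$ gives $|\mathcal{S}_n|\,\euler^{-n\ln u}\le 1$, and the strict inequality $(1-p)^N<\euler^{-pN}$ for $p\in(0,1)$ pushes this expectation below $1$. Consequently, some realization of the random sample leaves no subset uncovered, producing a $c$-ideal family of size $N$. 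The only step requiring real thought is the symmetry-based transfer of probabilities from fixed $h$ to fixed $S$; the remainder is a textbook probabilistic-covering argument and should go through with no essential obstacle.
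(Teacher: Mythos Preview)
Your argument is correct and follows the same probabilistic-covering strategy as the paper: draw balanced hash functions independently and uniformly, union-bound over $\mathcal{S}_n$, and convert $(1-p)^N<\euler^{-pN}$ into the stated ceiling. Your symmetry/double-counting paragraph is the one substantive addition---it rigorously shows $\P_{h\in\mathcal{H}_\mathrm{eq}}(h\text{ is $c$-ideal for }S)=p$ for every \emph{fixed} $S$, whereas the paper at that step cites identities that literally give only the transposed statement (fixed balanced $h$, random $S$), so your extra care actually tightens the paper's own presentation.
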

\begin{proof}		
	Let $h$ be a hash function and $S \in \Sn$ be a set of $n$ keys.
	Denote by $G_{h,S}$ the characteristic function indicating whether $h$ is $c$-ideal for $S$:
	\[G_{h,S}\coloneqq\begin{cases}1,&\text{if }\alpha_\textnormal{max}\le c\alpha\,,\\0&\text{otherwise}.\end{cases}\]
	Let $B_{h,S} \coloneqq 1-G_{h,S}$ denote the characteristic function indicating whether $h$ is \emph{not} $c$-ideal for $S$. For a subset $S\in\mathcal{S}_n$ chosen uniformly at random, we can thus write the probability that $h$ is $c$-ideal for $S$ as follows:
	\begin{equation}
		\underset{S\in\mathcal{S}_n}{\P}(\alpha_\textnormal{max} \le c\alpha)=1-\underset{S\in\mathcal{S}_n}{\P}(\alpha_\textnormal{max} > c\alpha) = 1-\underset{S\in\mathcal{S}_n}{\mathbb{E}}[B_{h,S}]. \label{eq:upper_bound_hashing_1}
	\end{equation}
	
	Let $h\in\mathcal{H}_\mathrm{eq}$ be any balanced hash function. Since $S$ is chosen uniformly at random, the probability that $h$ is $c$-ideal for $S$ is the number of sets for which $h$ is $c$-ideal divided by the number of sets in total:
	\begin{equation}
		\underset{S\in\mathcal{S}_n}{\P}(\alpha_\textnormal{max} \le c\alpha) = \frac{M_c}{|\mathcal{S}_n|} \eqqcolon p. \label{eq:upper_bound_hashing_2}
	\end{equation}
		
	We now turn towards families of hash functions.
	For a family of hash functions $\mathcal{H}$, the product $B_{\mathcal{H},S}\coloneqq\prod_{h\in\mathcal{H}}B_{h,S}$
	is equal to 1 if and only if no $h\in\mathcal{H}$ is $c$-ideal for $S$.
	Let $S \in \Sn$ be again any set of $n$ keys. Let $\mathcal{H}\subseteq\mathcal{H}_\mathrm{eq}$ be a family of hash functions that consists of an arbitrary number of equi\-distributing hash functions chosen uniformly at random with replacement.
	The probability that this $\mathcal{H}$ is not $c$-ideal for $S$ is
	\begin{align*}
		\P(\mathcal{H} \text{ is not $c$-ideal for $S$}) &=\mathbb{E}[B_{\mathcal{H},S}]\\
		\cmt{$h$ chosen independently}&=\mathbb{E}[\prod_{h\in\mathcal{H}}\left(B_{h,S}\right)]\\
		\cmt{linearity of expectation}	&=\prod_{h\in\mathcal{H}}\mathbb{E}[B_{h,S}]\\
		\cmt{Equation (\ref{eq:upper_bound_hashing_1}) and (\ref{eq:upper_bound_hashing_2})}&=\prod_{h\in\mathcal{H}}\left(1-p\right)\\
		\cmt{independent of hash function}&=\left(1-p\right)^{|\mathcal{H}|}.
	\end{align*}
	The union bound yields
	$|\mathcal{S}_n|\left(1-p\right)^{|\mathcal{H}|}$ as an upper bound for the probability that $\mathcal{H}$ is not $c$-ideal for $\Sn$:
	\begin{align*}
		\P(\mathcal{H} \text{ is not $c$-ideal for $\Sn$}) &= \P\left(\bigcup_{S \in \Sn}\mathcal{H} \text{ is not $c$-ideal for $S$}\right) \\
		\cmt{union bound}&\le \sum_{S \in \Sn} \P(\mathcal{H} \text{ is not $c$-ideal for $S$}) \\
		&= |\Sn|\left(1-p\right)^{|\mathcal{H}|}.
	\end{align*}
	If this probability is less than 1, then the converse probability, $\mathbb{P}(\alpha_\textnormal{max}\le c\alpha)$, 
	is larger than 0 and we can infer that at least one $c$-ideal family of size $|\mathcal{H}|$ exists. Consider an $\mathcal{H}$ of size $|\mathcal{H}|>-\ln |\mathcal{S}_n|/\ln (1-p)$:
	\begin{align*}
		|\mathcal{H}|&>\frac{-\ln |\mathcal{S}_n|}{\ln (1-p)} \iff \\
		\ln \left((1-p)^{|\mathcal{H}|}\right) &>-\ln |\mathcal{S}_n| \Longrightarrow \cmt{exponentiate}\\
		(1-p)^{|\mathcal{H}|} &> \frac{1}{|\Sn|} \Longleftrightarrow\\
		|\mathcal{S}_n|\left(1-p\right)^{|\mathcal{H}|}&<1 \Longrightarrow \\ \mathbb{P}(\alpha_\textnormal{max}> c \alpha) &<1.
	\end{align*}
	Thus, if $\mathcal{H}$ is strictly larger than the chosen value, it is $c$-ideal for $\mathcal{S}_n$.
	The minimal number of hash functions needed to obtain a $c$-ideal family of hash functions is thus at most $1$ plus the chosen value:
	\[H_c\le1+\left\lfloor\frac{-\ln |\mathcal{S}_n|}{\ln(1-p)}\right\rfloor.\]
	We simplify this bound with the first-order Taylor estimate for the natural logarithm, which has the expansion $\ln(1+x)=x-x^2/2+x^3/3-x^4/4+\ldots$ around 1, and use $|\mathcal{S}_n|=\binom un\le u^n$ to conclude the proof:
	\[
	H_c\le1+\left\lfloor\frac{\ln |\mathcal{S}_n|}{-\ln(1-p)}\right\rfloor \le\left\lceil\frac{\ln |\mathcal{S}_n|}{p}\right\rceil
	\le\left\lceil\frac {|\mathcal{S}_n|}{M_c}\ln |\mathcal{S}_n| \right\rceil\le\left\lceil\frac {|\mathcal{S}_n|}{M_c}n\ln u\right\rceil.
	\]
	
\end{proof}

We combine \cref{thm:volumebound} and \cref{thm:probabilitybound2} and summarize our findings:
\begin{corollary}[General Bounds on Family Size $\bm{H_c}$]\label{cor:mainbounds}
	The size of a $c$-ideal family of hash functions is bounded by
	\[
	\frac 1 {\P(\alpha_\textnormal{max}\le c\alpha)} \le H_c \le \frac {n \ln{u}} {\P(\alpha_\textnormal{max}\le c\alpha)}.\]
\end{corollary}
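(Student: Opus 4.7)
The plan is essentially mechanical: combine \cref{thm:volumebound} and \cref{thm:probabilitybound2}, using the identity that translates the ratio $|\mathcal{S}_n|/M_c$ into the probabilistic language introduced immediately before \cref{thm:probabilitybound2}. The key observation is that, by \cref{thm:equidistribution}, the quantity $M_c$ is attained by every balanced hash function $h\in\mathcal{H}_\mathrm{eq}$, and for any such $h$, if we draw $S\in\mathcal{S}_n$ uniformly at random, then
\[
\P(\alpha_\textnormal{max}\le c\alpha)=\frac{M_c}{|\mathcal{S}_n|},
\]
as already recorded in \eqref{eq:upper_bound_hashing_2}. Equivalently, $|\mathcal{S}_n|/M_c = 1/\P(\alpha_\textnormal{max}\le c\alpha)$, and this single identity is the glue that converts the two preceding statements into the claimed two-sided bound.

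For the lower bound, I would simply quote \cref{thm:volumebound}, which says $H_c \ge |\mathcal{S}_n|/M_c$; substituting the identity above yields $H_c \ge 1/\P(\alpha_\textnormal{max}\le c\alpha)$. For the upper bound, I would invoke \cref{thm:probabilitybound2}, which reads $H_c \le \lceil (|\mathcal{S}_n|/M_c)\,n\ln u\rceil$; substituting again and dropping the outer ceiling produces $H_c \le n\ln u/\P(\alpha_\textnormal{max}\le c\alpha)$, matching the statement.

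There is no genuine obstacle—the corollary is really just a restatement of the two lemmas in probabilistic notation. The only point requiring a brief comment is that the corollary suppresses the ceiling from \cref{thm:probabilitybound2}; this is harmless because the bound is otherwise vacuous for the trivial regime $c\ge m$ (where $H_c=1$ anyway), and in all other regimes the right-hand side is at least $1$, so absorbing the ceiling at most inflates the bound by less than one unit. I would note this in a single sentence and conclude.
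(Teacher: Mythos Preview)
Your proposal is correct and mirrors the paper exactly: the paper does not give a separate proof but simply states that the corollary combines \cref{thm:volumebound} and \cref{thm:probabilitybound2} via the identity $\P(\alpha_\textnormal{max}\le c\alpha)=M_c/|\mathcal{S}_n|$, which is precisely what you do. Your remark about the suppressed ceiling is a reasonable gloss on a point the paper leaves implicit.
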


Now, to lower-bound $H_c$, we first consider for $c=1$ a straightforward application of \cref{thm:volumebound} suggested by Mehlhorn~\cite{mehlhorn} and then ponder whether we could extend this approach for $c$ larger than $1$.
\begin{lemma}\label{lem:straightforwardestimation_H_c}
	The number of hash functions in a $1$-ideal family of hash functions is bounded from below by approximately
	\[\approx \frac{\sqrt{2\pi\alpha}^{m-1}}{\sqrt{m}}.\]
\end{lemma}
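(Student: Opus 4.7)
The plan is to apply the Volume Bound from \cref{thm:volumebound} together with the characterization of maximizers from \cref{thm:equidistribution}, and then exploit the fact that for $c=1$ the count $M_1$ factors very cleanly.

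First I would use \cref{thm:equidistribution} to restrict attention to a balanced hash function $h_\mathrm{eq}$ when computing $M_1$. For such an $h_\mathrm{eq}$, each cell has exactly $u/m$ preimages (pretending $m\mid u$ and $m\mid n$ for this approximate bound, which is what the statement allows). A set $S\in\Sn$ is hashed $1$-ideally iff every $\alpha_k \le \alpha$; since the $\alpha_k$ sum to $n = m\alpha$, this forces $\alpha_k = \alpha$ for every $k$. Choosing such an $S$ is therefore equivalent to choosing exactly $\alpha$ keys from each of the $m$ preimage blocks independently, giving
\[
M_1 \;=\; \binom{u/m}{\alpha}^{\!m}.
\]
Plugging this and $|\Sn|=\binom{u}{n}$ into \cref{thm:volumebound} yields
\[
H_1 \;\ge\; \frac{\binom{u}{n}}{\binom{u/m}{\alpha}^{\!m}}.
\]

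Next I would apply Stirling's approximation $k!\approx\sqrt{2\pi k}\,(k/\euler)^k$ to each factorial appearing in the numerator and denominator. The key observation is that the exponential parts cancel exactly: in the numerator we get $u^u/(n^n(u-n)^{u-n})$, while in the denominator $(u/m)^{um}/((\alpha)^{\alpha m}(u/m-\alpha)^{(u/m-\alpha)m})$ simplifies, using $n=m\alpha$ and $u-n=m(u/m-\alpha)$, to the very same $u^u/(n^n(u-n)^{u-n})$. After cancellation only the polynomial $\sqrt{2\pi\,\cdot\,}$ factors remain, giving
\[
H_1 \;\gtrsim\; \frac{1}{\sqrt{2\pi}}\sqrt{\frac{u}{n(u-n)}}\cdot (2\pi)^{m/2}\left(\frac{\alpha\,(u/m-\alpha)}{u/m}\right)^{\!m/2}.
\]

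Finally, invoking the standing assumption $u \ge n^2$, so that $u-n \approx u$ and $u/m-\alpha\approx u/m$, the first square root simplifies to $1/\sqrt{2\pi n}$ and the bracket to $\alpha$, so the whole expression collapses to $(2\pi\alpha)^{m/2}/\sqrt{2\pi n}$. Writing $\sqrt{2\pi n}=\sqrt{2\pi m\alpha}=\sqrt{2\pi\alpha}\,\sqrt{m}$ produces the claimed
\[
H_1 \;\gtrsim\; \frac{\sqrt{2\pi\alpha}^{\,m-1}}{\sqrt{m}}.
\]

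The main obstacle is purely bookkeeping: the $\sqrt{\,\cdot\,}$ Stirling prefactors must be tracked carefully, since an off-by-one in an exponent of $\sqrt{2\pi\alpha}$ or a missed $\sqrt{m}$ would spoil the statement. The exponential parts cancelling exactly is the crucial algebraic step that makes the whole thing go through, and it relies on $n$, $u$, $m$ satisfying $n=m\alpha$ and the balanced decomposition being uniform — which is why the $c=1$ case is tractable by this direct route while, as the paper remarks afterwards, $c>1$ is not.
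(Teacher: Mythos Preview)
Your proposal is correct and follows essentially the same route as the paper: compute $M_1=\binom{u/m}{\alpha}^m$ for a balanced hash function, apply the volume bound $H_1\ge\binom{u}{n}/M_1$, expand all factorials via Stirling, observe that the exponential parts cancel exactly, and then simplify the remaining $\sqrt{2\pi\,\cdot\,}$ prefactors using $u\gg n$ to reach $\sqrt{2\pi\alpha}^{\,m-1}/\sqrt{m}$. The paper additionally carries along the Robbins correction factor $A=\exp(\cdots)$ and the factor $\sqrt{(1-n/u)^{m-1}}$ explicitly before declaring them negligible, whereas you absorb these into the ``$\gtrsim$'' from the outset; otherwise the arguments coincide.
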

\begin{proof}
	$M_1$, the number of sets that a single hash function can map $1$-ideal is attained by balanced hash functions.
	A balanced hash function divides the universe into $m$ equal parts of size $u/m$. Exactly the sets that consist of $\alpha = n/m$ keys from each of the $m$ parts of the universe are mapped $1$-ideally.
	There are exactly 
	\begin{equation}
		M_1 = \binom{u/m}{\alpha}^m. \label{eq:M1}
	\end{equation}
such sets. We write again $\beta \coloneqq u/m$ and use the volume bound to estimate $H_1$ as follows:
	\begin{align*}
		\cmt{\cref{thm:volumebound}} H_1{}&\ge\frac {\mathcal{S}_n}{M_1}\\
		\cmt{Equation (\ref{eq:M1})}&=\frac{\binom un\hphantom{{}^m}}{{\binom\beta\alpha}^m}\\
		\cmt{Definition of binomial}&=\frac{u!(\beta-\alpha)!^m\alpha!^m}{(u-n)!n!\beta!^m}
	\end{align*}
	These factorials can be estimated with Stirling's formula. We are going to use the following estimates by Robbins~\cite{stirling}:
	\begin{equation*}
		\sqrt{2\pi n} \left(\frac{n}{e}\right)^n e^{\frac1 {12n+1}}\le n! \le \sqrt{2\pi n} \left(\frac{n}{e}\right)^n e^{\frac1 {12n}}.
	\end{equation*}
	We insert these estimations for the factorials and then reorder the terms repeatedly to simplify the expression:
	\begin{align*}
		&\ge\frac{\sqrt{2\pi u}\sqrt{2\pi (\beta-\alpha)}^m\sqrt{2\pi \alpha}^m}{\sqrt{2\pi (u-n)}\sqrt{2\pi n}\sqrt{2\pi \beta}^m}\cdot\frac{\left(\frac {u}e\right)^{u}\left(\frac{\beta-\alpha}e\right)^{m(\beta-\alpha)}\left(\frac {\alpha}e\right)^{m\alpha}}{\left(\frac {u-n}e\right)^{(u-n)}\left(\frac {n}e\right)^{n}\left(\frac {\beta}e\right)^{m\beta}}\cdot\\
		&\hphantom{{}={}}\cdot\frac{e^{\frac1{12u+1}}e^{\frac1{12(\beta-\alpha)+1}}e^{\frac1{12\alpha+1}}}{e^{\frac1{12(u-n)}}e^{\frac1{12n}}e^{\frac1{12\beta}}}\\
		&=\sqrt{2\pi}^{2m+1-(2+m)}\sqrt{\frac{u(\beta-\alpha)^m\alpha^m}{(u-n)n\beta^m}}\cdot
		\frac{\left(\frac {u}e\right)^{u}\left(\frac{\beta-\alpha}e\right)^{u-n}\left(\frac {\alpha}e\right)^{n}}{\left(\frac {u-n}e\right)^{u-n}\left(\frac {n}e\right)^{n}\left(\frac {\beta}e\right)^{u}}\cdot \\
		&\hphantom{{}={}}\cdot
		e^{\frac1{12u+1}+\frac1{12(\beta-\alpha)+1}+\frac1{12\alpha+1}-\frac1{12(u-n)}-\frac1{12n}-\frac1{12\beta}}.
		\end{align*}
	We pause for a moment and abbreviate $A\coloneqq e^{\frac1{12u+1}+\frac1{12(\beta-\alpha)+1}+\frac1{12\alpha+1}-\frac1{12(u-n)}-\frac1{12n}-\frac1{12\beta}}$ before we continue our calculation:
	\begin{align*}
		\cmt{$\beta = u/m$}&=\sqrt{2\pi}^{m-1}\sqrt{\frac{u(u-n)^m\alpha^m}{(u-n)nu^m}}\cdot
		\frac{u^{u}\left(\beta-\alpha\right)^{u-n}\alpha^{n}}{\left(u-n\right)^{u-n}n^{n}\beta^{u}}\cdot A\\
		&=\sqrt{2\pi}^{m-1}\sqrt{\frac{\left(\frac{u-n}{u}\right)^{m-1}\alpha^m}{n}}\cdot 
		\frac{{u^{u}}{\left(u-n\right)^{u-n}}{n^{n}}}{{\left(u-n\right)^{u-n}}{n^{n}}{u^{u}}}\cdot A\\
		\cmt{$\frac{{u^{u}}{\left(u-n\right)^{u-n}}{n^{n}}}{{\left(u-n\right)^{u-n}}{n^{n}}{u^{u}}}=1$}&=\sqrt{2\pi\alpha}^{m-1}\sqrt{\left(\frac{u-n}{u}\right)^{m-1}}\sqrt{\frac{\alpha}{n}}\cdot A\\
		\cmt{$\alpha = n/m$}&=\sqrt{2\pi\alpha}^{m-1}\sqrt{\left(1-\frac{n}{u}\right)^{m-1}}\sqrt{\frac1m}\cdot A.
	\end{align*}
	The factor $A$ is smaller than $e$ and tends to $1$ with $\alpha \to \infty$.
Hence, this is approximately
	\begin{align*}
		&\approx\sqrt{2\pi\alpha}^{m-1}\sqrt{\left(1-\frac{n}{u}\right)^{m-1}}\sqrt{\frac1m}\\
		&\approx\frac1{\sqrt m}\sqrt{2\pi\alpha}^{m-1}.
	\end{align*}
	
\end{proof}
The natural extension of this approach yields 
$H_c \ge K(n,m,c\alpha)\binom u n / \binom{u/m}{c\alpha}^m$,
where $K(n,m,c\alpha)$ denotes the number of compositions of $n$ into $m$ non-negative integers between $0$ and $c\alpha$. 
This factor,
which accounts for the number of possibilities to split the $n$ keys into $m$ different cells,
equals 1 for $c=1$.
For general $c$, however, to the best of our knowledge, even the strongest approximations for $K(n,m,c\alpha)$ do not yield a meaningful lower bound.
Therefore, we are forced to use a different strategy for general $c$ and we estimate $M_c$ via the probability $\P(\alpha_\textnormal{max}\le c\alpha)$ in the following section.

\section{Estimations for $\mathbb{P}(\alpha_\textnormal{max}\le c\alpha)$ }\label{sec:estimationofalphamax}
It remains to find good bounds on the probability $\mathop{\mathbb{P}}(\alpha_\textnormal{max}\le c\alpha
)$, which will immediately yield the desired bounds on $H_c$.
We start by establishing an upper bound on $\mathop{\mathbb{P}}(\alpha_\textnormal{max}\le c\alpha
)$.
\subsection{Upper Bound}
Recall that we fixed a balanced hash function $h \in \mathcal{H}_{eq}$ and draw an $S \in \Sn$ uniformly at random. For every cell $k \in T$, we model its load $\alpha_k$ as a random variable.
The joint probability distribution of the $\alpha_i$ follows the hypergeometric distribution, \ie
\begin{align*}
\P((\alpha_1,\ldots,\alpha_m)=(\ell_1,\ldots,\ell_m))&=
\frac{\binom{|h^{-1}(1)|}{\ell_1}\cdots\binom{|h^{-1}(m)|}{\ell_m}}{\binom un}.
\end{align*}
There are two obstacles to overcome. First, calculating with probabilities without replacement is difficult.
Second, the sum of the $\alpha_i$ is required to be $n$; in particular, the variables are not independent.
The first obstacle can be overcome by considering drawing elements from $U$ with replacement instead, \ie drawing multisets instead of sets. As the following lemma shows, we do not lose much by this assumption.
\begin{lemma}[Switching to Drawing With Replacement] \label{thm:replacement}
	For any fixed $\varepsilon>0$, there are $u$ and $n$ large enough such that
	\[\frac{M_c}{|\Sn|} \le (1+\varepsilon)\P(T_i \le c\alpha, i \in [m]), \]
	where we use $T_1, \ldots T_m$ to model the cell loads as binomially distributed random variables with parameters $n$ and $1/m$. In other words, for $u$ and $n$ large enough, we can consider drawing the elements of $S \in \Sn$ with replacement and only lose a negligible factor.
\end{lemma}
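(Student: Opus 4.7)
The proof plan I favor sidesteps any manipulation of joint pmfs by a direct counting argument at the level of ordered sequences. Let $N_{\text{wo}}$ denote the number of ordered sequences $(x_1,\dots,x_n)\in U^n$ of pairwise distinct elements whose cell loads under $h$ all respect $\alpha_i\le c\alpha$, and let $N_{\text{wi}}$ denote the analogous count when repetitions are allowed (the ``cell load'' now being the number of positions mapped to that cell). Since every admissible distinct sequence is in particular an admissible sequence with repetitions, one has trivially $N_{\text{wo}}\le N_{\text{wi}}$; this tiny observation is the entire combinatorial content of the lemma.

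Each $c$-ideally hashed $S\in\Sn$ gives rise to exactly $n!$ admissible distinct sequences, so $M_c\,n! = N_{\text{wo}}$ and hence
\[\frac{M_c}{|\Sn|} \;=\; \frac{M_c\,n!\,(u-n)!}{u!} \;=\; \frac{N_{\text{wo}}}{u(u-1)\cdots(u-n+1)}.\]
On the multinomial side, placing each of the $n$ independent uniform draws from $U$ into cell $i$ with probability $p_i=\beta_i/u = 1/m \pm 1/u$ makes $(T_1,\dots,T_m)$ multinomial, and counting admissible ordered outcomes directly yields
\[\P(T_i\le c\alpha,\ i\in[m]) \;=\; \frac{N_{\text{wi}}}{u^n}.\]
Plugging the two identities into $N_{\text{wo}}\le N_{\text{wi}}$ gives the key inequality
\[\frac{M_c}{|\Sn|} \;\le\; \frac{u^n}{u(u-1)\cdots(u-n+1)}\cdot \P(T_i\le c\alpha,\ i\in[m]).\]

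It remains to bound the scalar prefactor $\prod_{j=0}^{n-1}(1-j/u)^{-1}$ by $1+\varepsilon$ for $u$ and $n$ sufficiently large. Using $\ln(1-x)\ge -x-x^2$ for $x\in[0,1/2]$, its logarithm is at most $\sum_{j=0}^{n-1}(j/u+j^2/u^2) \le n^2/(2u)+n^3/(3u^2)$, which is $o(1)$ as soon as $u=\omega(n^2)$, which is a regime the statement allows. A final cosmetic adjustment replaces $p_i=\beta_i/u$ by the nominal $1/m$ of the lemma statement; the ratio of the two multinomial weights on any admissible tuple is $\prod_i(m\beta_i/u)^{\ell_i}=1+O(nm/u)$, which is once more absorbed into $1+\varepsilon$.

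No step looks like a real obstacle: the heart of the argument is the comparison $N_{\text{wo}}\le N_{\text{wi}}$, which is essentially free, and the only delicate point is the asymptotic relation between $u$ and $n$. Because the statement only requires the existence of ``$u$ and $n$ large enough,'' the mild condition $u=\omega(n^2)$ suffices to make every error term vanish, and nothing subtler about the distribution of the $\ell_i$ need be exploited.
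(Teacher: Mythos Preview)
Your proof is correct and shares the paper's core idea---embed the without-replacement experiment into the with-replacement one and bound the resulting scalar factor---but you keep the bookkeeping at the level of \emph{ordered} sequences, whereas the paper works with unordered multisets. The paper uses the trivial inclusion $M_c\le K_c$ (every set is a multiset) and then asserts $\P(T_i\le c\alpha,\ i\in[m])=K_c/|\mathcal{K}_n|$; this identification is not literally correct, since sampling with replacement does not induce the uniform distribution on $\mathcal{K}_n$ (different multisets have different numbers of orderings). Your ordered-sequence route sidesteps that issue: the identity $N_{\mathrm{wi}}/u^n=\P'(T_i\le c\alpha)$ is exact for the multinomial with cell probabilities $\beta_i/u$, and the residual adjustment from $\beta_i/u$ to the nominal $1/m$ is an honest $1+O(nm/u)$ correction you account for explicitly. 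Both arguments require the same asymptotic $u=\omega(n^2)$ (your $u^n/(u)_n\to 1$ and the paper's $|\mathcal{K}_n|/|\Sn|\to 1$ are equivalent in this respect), so neither buys a wider regime; your version simply makes the link to the binomial $T_i$ rigorous without extra work.
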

\begin{proof}
	We use $\mathcal{K}_n$ to denote the set of multisets with elements from $U$ of cardinality $n$.
	There are $\binom{u+n-1}{n}$ such multisets.\footnote{Maybe the best way to see this is to observe the following bijection between multisets of size $n$ with elements from a set of size $u$ and normal sets of size $n$ consisting of elements of a set of size $u+n-1$: Write the elements $x_1,\ldots,x_n$ of the multiset in non-decreasing order and then map this multiset to the set $\{x_1,x_2+1,\ldots,x_n+n-1\}$.}
	We write $K_c$ for the number of those multisets of $\mathcal{K}_n$ that are hashed $c$-ideally by $h$. Note that $\P(T_i \le c\alpha, i \in [m])$ equals $K_c / |\mathcal{K}_n|$.
	Clearly, $M_c$ is less than or equal to $K_c$ since every true set is a multiset. It follows that $M_c/|\Sn| \le K_c / |\Sn|$. To conclude, we multiply the right-hand side with $|\Sn| / |\mathcal{K}_n|$ and observe that
	\[
	\frac{|\mathcal{S}_n|}{|\mathcal{K}_n|} = 
	\frac{\binom{u}{n}}{\binom{u+n-1}{n}} = 
	\prod_{k=0}^{n-1}\frac{(u-k)}{(u+n-1-k)} \; \xrightarrow[n/u \to 0]{u \to \infty} \; 1.\]
\end{proof}

If we consider drawing elements from $U$ with replacement, we have a multinomial distribution, \ie
\begin{align*}
\P((T_1,\ldots,T_m)=(\ell_1,\ldots,\ell_m))&=
\binom{n}{\ell_1,\ldots,\ell_m}/m^n.
\end{align*}
This is easier to handle than the hypergeometric distribution. The $T_i$, $i\in[m]$, are still not independent, however.
The strong methods by Dubhashi and Ranjan~\cite{negativedependence} provide us with a simple way to overcome this second obstacle:

\begin{lemma}[Proposition 28 and Theorem 31 from~\cite{negativedependence}]\label{lem:negativeregression}
	The joint distribution of the random variables $\alpha_i$ for $i\in[m]$ is upper-bounded by their product:
	\[\P(\alpha_i\le c\alpha, i\in [m]) \le \prod_{i=1}^{m}\P(\alpha_i\le c\alpha). \]
	The same holds if we consider the $T_i$, $i\in[m]$, instead of the $\alpha_i$.
\end{lemma}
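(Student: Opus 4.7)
The plan is to reduce the claim to the negative association (NA) framework and then invoke the two cited results from Dubhashi and Ranjan directly. Write each cell load as a sum of indicator variables: for the multinomial case, let $X_{ji}$ indicate whether the $j$-th drawn key lands in cell $i$, so $T_i = \sum_{j=1}^n X_{ji}$. For the hypergeometric case (drawing without replacement from the $u$ keys partitioned by $h_{\mathrm{eq}}$), let $Y_{ki}$ indicate whether the $k$-th universe element both belongs to $h^{-1}(i)$ and is contained in the random $S$, so $\alpha_i = \sum_{k: h(k)=i} Y_{ki}$.

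Next, I would verify the hypotheses of the two cited tools. Within a single draw $j$, the vector $(X_{j1},\ldots,X_{jm})$ is a $\{0,1\}$-vector with coordinate sum exactly $1$, hence negatively associated by the classical Joag-Dev--Proschan lemma (this is the content of Proposition~28 in~\cite{negativedependence}). Independence across $j$ and closure of NA under independent concatenation carry the property to the full family $\{X_{ji}\}$. Since the $T_i$ are sums over disjoint index sets of this NA family, the closure property ``non-decreasing functions on disjoint index sets preserve NA'' (Theorem~31 in~\cite{negativedependence}) implies that $(T_1,\ldots,T_m)$ is itself negatively associated. For the hypergeometric case, the elementary indicators $Y_{ki}$ are negatively associated by the same pair of results applied to sampling without replacement, and the same disjoint-sum argument carries the property to $(\alpha_1,\ldots,\alpha_m)$.

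Finally, apply the defining inequality of negative association to the non-\emph{increasing} functions $f_i(x) \coloneqq \mathbf{1}[x \le c\alpha]$. Because the $f_i$ have pairwise disjoint supports in the sense required by the NA definition (each $f_i$ depends only on the $i$-th coordinate), we obtain
\[
\P(\alpha_i \le c\alpha,\ i \in [m]) = \mathbb{E}\Bigl[\prod_{i=1}^{m} f_i(\alpha_i)\Bigr] \le \prod_{i=1}^{m} \mathbb{E}[f_i(\alpha_i)] = \prod_{i=1}^{m} \P(\alpha_i \le c\alpha),
\]
and analogously for the $T_i$.

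The only substantive step is the invocation of the cited machinery; since the authors explicitly frame the lemma as a direct consequence of Proposition~28 and Theorem~31 of~\cite{negativedependence}, the main obstacle is the purely notational one of matching our balls-in-bins setup (both with and without replacement) to the hypotheses of those two statements, which the decomposition above handles uniformly.
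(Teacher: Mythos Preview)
The paper does not give its own proof of this lemma: it is stated purely as a citation, with the inequality attributed directly to Proposition~28 and Theorem~31 of Dubhashi and Ranjan. Your proposal is a correct unpacking of that citation---expressing the cell loads as disjoint sums of $\{0,1\}$ indicators, invoking the cited results to obtain negative association of the load vector, and then applying the NA inequality to the coordinatewise non-increasing indicators $\mathbf{1}[x\le c\alpha]$---so it matches the paper's intent exactly. One cosmetic remark: in the hypergeometric case your indicators $Y_{ki}$ are redundant in the index $i$ (since $h(k)$ is fixed), so it is cleaner to work with $Z_k=\mathbf{1}[k\in S]$, a $\{0,1\}$-vector with prescribed sum $n$, and note that the $\alpha_i$ are disjoint partial sums of the $Z_k$; this is the form in which the Dubhashi--Ranjan framework applies most directly, but your version is equivalent.
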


We are ready to give an upper bound on the probability that a hash function family is $c$-ideal.

\begin{theorem}[Upper Bound on $\P(\alpha_\textnormal{max} \le c\alpha)$]\label{thm:excessprobability}
	For arbitrary $\varepsilon>0$,
	\[\P(\alpha_\textnormal{max} \le c\alpha) \le (1+\varepsilon)\exp\left(-m \euler^{-\alpha}(1-\varepsilon)\left(\frac \alpha {c\alpha+1}\right)^{c\alpha+1}\right),\]
	where $n$ tends to infinity.
\end{theorem}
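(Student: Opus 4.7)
The plan is to chain \cref{thm:replacement} and \cref{lem:negativeregression} and then reduce everything to a single binomial atom. Since $h$ is balanced, $T_1,\ldots,T_m$ are exchangeable, so combining the two lemmas yields
\[\P(\alpha_\textnormal{max}\le c\alpha)=\frac{M_c}{|\Sn|}\le(1+\varepsilon)\,\P(T_i\le c\alpha,\,i\in[m])\le(1+\varepsilon)\,\P(T_1\le c\alpha)^m.\]
Writing $\P(T_1\le c\alpha)=1-p$ and applying the elementary $(1-p)^m\le\exp(-mp)$ reduces the theorem to showing $p\ge(1-\varepsilon)\,\euler^{-\alpha}(\alpha/(c\alpha+1))^{c\alpha+1}$.

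To bound $p=\P(T_1>c\alpha)$ from below, I would retain only the single atom at the threshold, $p\ge\P(T_1=k)$ with $k\coloneqq\lceil c\alpha\rceil+1$ (the ceiling only contributes to the $\varepsilon$), and expand the binomial formula
\[\P(T_1=k)=\binom{n}{k}\frac{1}{m^k}\Bigl(1-\frac{1}{m}\Bigr)^{n-k}.\]
Three asymptotic estimates then finish the job. First, rewriting $\binom{n}{k}/m^k=(\alpha^k/k!)\prod_{i=0}^{k-1}(1-i/(m\alpha))$, the product is at least $1-\varepsilon$ for $n$ large. Second, a Poissonization-style estimate gives $(1-1/m)^{n-k}\ge(1-\varepsilon)\,\euler^{-\alpha}$. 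Third, the trivial bound $k!\le k^k$ converts $\alpha^k/k!$ into the weaker $(\alpha/k)^k=(\alpha/(c\alpha+1))^{c\alpha+1}$. Bundling the three $(1-\varepsilon)$ factors into a single one yields exactly the bound required by the first step.

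I expect the main obstacle to be the asymptotic bookkeeping rather than any individual estimate: the stated $1\pm\varepsilon$ conceals three separate error sources---the multiset-to-set passage in \cref{thm:replacement}, the binomial-to-Poisson convergence $(1-1/m)^n\to\euler^{-\alpha}$, and the product $\prod_{i<k}(1-i/(m\alpha))\to 1$. Verifying that these can be subsumed into a single uniform $\varepsilon$ as $n\to\infty$ demands some care about how $m$, $\alpha$, and $c$ are allowed to grow jointly; the natural sufficient condition is $c^2\alpha/m\to 0$, and in that regime all three estimates are routine.
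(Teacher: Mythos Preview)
Your proposal is correct and follows essentially the same route as the paper: chain \cref{thm:replacement} and \cref{lem:negativeregression}, use $(1-p)^m\le\euler^{-mp}$, and lower-bound $p$ by the single binomial term at $k=c\alpha+1$. The only difference is cosmetic---where you split $\binom{n}{k}/m^k=(\alpha^k/k!)\prod_{i<k}(1-i/n)$ and then use $k!\le k^k$, the paper applies the one-line bound $\binom{n}{k}\ge(n/k)^k$ directly, which eliminates one of your three $\varepsilon$-sources and makes your sufficient condition $c^2\alpha/m\to 0$ unnecessary.
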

\begin{proof}
	We can bound the probability that a balanced hash function is $c$-ideal as follows.
	\begin{align*}
	&= \P(\alpha_\textnormal{max} \le c\alpha) \\
	\cmt{Definition of maximum}&= \P(\alpha_i \le c\alpha, i \in [m]) \\
	\cmt{Switching to binomial variables, \cref{thm:replacement}}&\le (1+\varepsilon)\P(T_i \le c\alpha, i \in [m]) \\
	\cmt{\Cref{lem:negativeregression}}&\le (1+\varepsilon)\prod_{i=1}^{m}\P(T_i\le c\alpha)\\
	\cmt{$T_i$, $i\in[m]$, are identically distributed}&= (1+\varepsilon)\P(T_1 \le c\alpha)^m \\
	\cmt{Converse probability}&= (1+\varepsilon)\left(1-\P(T_1 > c\alpha)\right)^m \\
	\cmt{$e^{-x}\ge 1-x$ for $x\in(0,1)$}&\le (1+\varepsilon)\euler^{-m\P(T_1 > c\alpha)}.
	\end{align*}	
	To continue our estimation, we need to bound $\P(T_1 > c\alpha)$ from below.
	We start with the definition of the binomial distribution, take only the first summand and use the standard bound $\binom{n}{k}\ge (n/k)^k$ on the binomial coefficient to obtain the following.
	\begin{align*}
	\P(T_1 > c\alpha)\qquad&= \sum_{k=c\alpha+1}^{n} \binom n k \left(\frac 1 m \right)^k \left(1-\frac 1 m \right)^{n-k} \\
	\cmt{Rearrange}&= \left(1-\frac 1 m\right)^n \sum_{k=c\alpha+1}^{n} \binom n k \left(\frac 1 {m-1} \right)^k \\
	\cmt{$\frac n m = \alpha$ and $\binom n k \ge \left(\frac n k\right)^k$}&\ge \left(1-\frac \alpha n\right)^n \sum_{k=c\alpha+1}^{n} \left(\frac n k\right)^k \left(\frac 1 {m-1} \right)^k \\
	\cmt{Rearrange}&= \left(1-\frac \alpha n\right)^n \sum_{k=c\alpha+1}^{n} \left(\frac 1 k\right)^k \left(\frac n {m-1} \right)^k \\
	\cmt{$\frac n {m-1} = \frac n m + \frac n {m(m-1)} \ge \frac n m = \alpha$}&\ge \left(1-\frac \alpha n\right)^n \sum_{k=c\alpha+1}^{n} \left(\frac 1 k\right)^k \alpha^k \\
	\cmt{Take only $k=c\alpha+1$}&= \left(1-\frac \alpha n\right)^n \left(\frac \alpha {c\alpha+1}\right)^{c\alpha+1}\\
	\cmt{$\left(1-\frac \alpha n\right)^n \rightarrow \euler^{-\alpha}$ for $n$ large enough}&\ge (1-\varepsilon')\euler^{-\alpha} \left(\frac \alpha {c\alpha+1}\right)^{c\alpha+1}.
	\end{align*}
	If we combine the two estimates and choose $\varepsilon'$ appropriately, we achieve the desired result.
\end{proof}
\laterdef{\lowerboundHc}{(1-\varepsilon) \exp\left(\frac m {\euler^{\alpha}}(1-\varepsilon)\left(\frac \alpha {c\alpha+1}\right)^{c\alpha+1}\right)}

\Cref{cor:mainbounds} translates the upper bound from \cref{thm:excessprobability} into a lower bound on $H_c$.
\begin{theorem}[Lower Bound on Family Size $H_c$]\label{cor:lowerbound_H_c}
	For arbitrary $\varepsilon>0$, the number of hash functions in a $c$-ideal family of hash functions is bounded from below by
	\[H_c \ge \frac 1 {\P(\alpha_\textnormal{max} \le c\alpha)} \ge \lowerboundHc.
	\]
\end{theorem}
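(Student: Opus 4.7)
The plan is to obtain the stated bound as a direct consequence of two previously established results: the lower half of \cref{cor:mainbounds} and the upper bound on $\P(\alpha_\textnormal{max}\le c\alpha)$ from \cref{thm:excessprobability}. In fact, this theorem is essentially just a bookkeeping step that inverts the probability estimate and rephrases it as a lower bound on $H_c$, so the substantive work has already been done earlier.

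First I would recall the first inequality $H_c \ge 1/\P(\alpha_\textnormal{max}\le c\alpha)$, which is exactly the left-hand side of \cref{cor:mainbounds}. This part is inherited from the Volume Bound (\cref{thm:volumebound}) together with the observation, via \cref{thm:equidistribution}, that $M_c/|\mathcal{S}_n|$ coincides with the probability that a fixed balanced hash function maps a uniformly drawn $S\in\mathcal{S}_n$ in a $c$-ideal way. So no new argument is needed here; I only need to cite \cref{cor:mainbounds}.

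Second, I would invoke \cref{thm:excessprobability}, which gives, for every $\varepsilon'>0$ and for all sufficiently large $n$,
\[\P(\alpha_\textnormal{max} \le c\alpha) \le (1+\varepsilon')\exp\left(-m\,\euler^{-\alpha}(1-\varepsilon')\left(\frac{\alpha}{c\alpha+1}\right)^{c\alpha+1}\right).\]
Taking reciprocals of both sides and using $1/(1+\varepsilon') \ge 1-\varepsilon'$ yields
\[\frac{1}{\P(\alpha_\textnormal{max}\le c\alpha)} \ge (1-\varepsilon')\exp\left(m\,\euler^{-\alpha}(1-\varepsilon')\left(\frac{\alpha}{c\alpha+1}\right)^{c\alpha+1}\right).\]
Finally, I would match constants: given the target $\varepsilon>0$ in the theorem statement, choose $\varepsilon'>0$ small enough (for instance $\varepsilon' \le \varepsilon$) so that both occurrences of $(1-\varepsilon')$ can be replaced by $(1-\varepsilon)$, yielding exactly the claimed expression $\lowerboundHc$.

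There is no real obstacle at this stage; the only minor care needed is the cosmetic replacement of the $(1+\varepsilon')$ in the upper probability bound by a $(1-\varepsilon)$ factor after reciprocation, and making sure the same $\varepsilon$ symbol is used consistently in both factors of the exponential argument. All analytic content -- the reduction to balanced hash functions, the switch to drawing with replacement (\cref{thm:replacement}), the negative-regression decoupling (\cref{lem:negativeregression}), and the single-cell tail estimate $\P(T_1>c\alpha) \ge (1-\varepsilon)\euler^{-\alpha}(\alpha/(c\alpha+1))^{c\alpha+1}$ -- has already been carried out in \cref{thm:excessprobability}, so the proof here is a one-line combination of the two cited results.
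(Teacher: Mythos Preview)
Your proposal is correct and follows exactly the route the paper takes: the paper states this theorem immediately after \cref{thm:excessprobability} with the remark that \cref{cor:mainbounds} ``translates the upper bound from \cref{thm:excessprobability} into a lower bound on $H_c$,'' and gives no further proof. Your write-up simply spells out that one-line combination (reciprocate the probability bound and absorb the $(1+\varepsilon')^{-1}$ into a $(1-\varepsilon)$), which is precisely what is intended.
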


\subsection{Lower Bound}
Another way to overcome the obstacle that the variables are not independent would have been to apply a customized Poissonization technique. The main monograph presenting this technique is in the book by Barbour et al.~\cite{poissonizationbook}; the textbook by Mitzenmacher and Upfal~\cite{mitzenmacher} gives a good illustrating example.
A series of arguments would have allowed us to bound the precision loss we incur by a constant factor of 2, leading to
\[\P(T_\textnormal{max}\le c\alpha) \le 2 \P(Y \le c\alpha)^m, \]
where we use $T_\textnormal{max}$ to denote $\max\{T_1,\ldots,T_m\}$ and where $Y$ is a Poisson random variable with mean $\alpha$.
By using \cref{lem:negativeregression}, we were able to abbreviate this approach.
However, part of this Poissonization technique can be used for the lower bound on $\P(\alpha_\textnormal{max} \le c\alpha)$.

Recall that a random variable $X$ following the Poisson distribution---commonly written as $X\sim P_\lambda$ and referred to as a Poisson variable---takes on the value $k$ with probability $\mathbb{P}(X=k)=\frac1{\euler^\lambda}\frac{\lambda^k}{k!}$.
The following lemma is the counterpart to \cref{thm:replacement}.
\begin{lemma}\label{thm:replacement2}
	We can bound $M_c/|\mathcal{S}_n|=\P(\alpha_\textnormal{max}\le c\alpha)$ from below by
	$\P(T_\textnormal{max}\le c\alpha)$.
\end{lemma}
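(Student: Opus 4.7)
I plan to derive the lemma via a Poissonization coupling analogous to the one behind \cref{thm:replacement}. Let $(C_y)_{y\in U}$ be independent Poisson$(n/u)$ random variables on a common probability space and define the Poissonized cell loads
\[
\tilde\alpha_i\coloneqq\sum_{y\in h^{-1}(i)}\mathbf{1}[C_y\ge 1],\qquad \tilde T_i\coloneqq\sum_{y\in h^{-1}(i)}C_y.
\]
Since $\mathbf{1}[C_y\ge 1]\le C_y$ pointwise for every $y$, we obtain $\tilde\alpha_i\le\tilde T_i$ coordinate-wise, hence $\tilde\alpha_\textnormal{max}\le\tilde T_\textnormal{max}$ almost surely, and therefore the Poissonized bound
\[\P(\tilde\alpha_\textnormal{max}\le c\alpha)\ge\P(\tilde T_\textnormal{max}\le c\alpha).\]

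To recover the non-Poissonized quantities in the statement of the lemma, I would transfer this inequality back to the two original models by conditioning on the relevant total counts. Conditioning on $|\tilde S|=\sum_y\mathbf{1}[C_y\ge 1]=n$ makes $\tilde S=\{y:C_y\ge1\}$ a uniformly chosen $n$-subset of $U$, so $(\tilde\alpha_i)$ has exactly the hypergeometric distribution of $(\alpha_i)$; conditioning on $\sum_yC_y=n$ makes $(\tilde T_i)$ the multinomial vector $(T_i)$.

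The step I expect to be the main obstacle is that these two conditioning events are not the same, so the pointwise Poisson inequality does not transfer verbatim. I would close this gap by the kind of Poissonization-monotonicity reasoning recalled just above the lemma: both events $\{\max_i\tilde\alpha_i\le c\alpha\}$ and $\{\max_i\tilde T_i\le c\alpha\}$ are monotone decreasing in their respective total counts $|\tilde S|$ and $\sum_yC_y$, and both totals concentrate tightly around their common mean $n$ in the paper's regime $u\ge n^2$ with $n\to\infty$. Combining these monotonicities with the coordinate-wise Poisson comparison absorbs the mismatch between the two conditionings and yields the clean bound $\P(\alpha_\textnormal{max}\le c\alpha)\ge\P(T_\textnormal{max}\le c\alpha)$ claimed in the lemma.
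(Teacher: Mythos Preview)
Your coupling $\tilde\alpha_i\le\tilde T_i$ is correct, but the ``transfer back'' step does not close, and in fact the inequalities point the wrong way. If you condition on $\sum_y C_y=n$, the left side becomes exactly $\P(T_\textnormal{max}\le c\alpha)$, but the right side becomes a mixture $\sum_{k\le n}\P_{S\in\mathcal S_k}(\alpha_\textnormal{max}\le c\alpha)\,\P(|\tilde S|=k\mid\sum C_y=n)$. Each term with $k\le n$ is \emph{at least} $\P_{S\in\mathcal S_n}(\alpha_\textnormal{max}\le c\alpha)$ by the obvious subset coupling, so you only learn that both $\P(T_\textnormal{max}\le c\alpha)$ and $\P(\alpha_\textnormal{max}\le c\alpha)$ lie below the same mixture---no comparison between them follows. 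Conditioning instead on $|\tilde S|=n$ fixes the $\alpha$-side but now the $T$-side is $\P(\tilde T_\textnormal{max}\le c\alpha\mid|\tilde S|=n)$, and since $|\tilde S|=n$ forces $\sum_y C_y\ge n$ (with strict inequality whenever some $C_y\ge 2$), monotonicity pushes this \emph{below} $\P(T_\textnormal{max}\le c\alpha)$, again the wrong direction. Your concentration remark could at best recover an inequality up to a $1+o(1)$ factor (and even that needs $u\gg n^2$, whereas the paper only assumes $u\ge n^2$), not the clean exact bound the lemma asserts.

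The paper's proof avoids Poissonization entirely and is a two-line averaging argument that exploits the definition of $M_c$ as a \emph{maximum}. Observe that $\P(T_\textnormal{max}\le c\alpha)$ is exactly the probability, for a fixed $S_0\in\mathcal S_n$ and a uniformly random $h\in\mathcal H_\text{all}$, that $h$ is $c$-ideal for $S_0$: each key of $S_0$ lands in each cell independently with probability $1/m$. By Fubini this equals $\mathbb E_{h\in\mathcal H_\text{all}}\bigl[\P_{S\in\mathcal S_n}(h\text{ is $c$-ideal for }S)\bigr]$, and the inner probability is at most $M_c/|\mathcal S_n|$ for every $h$ by definition of $M_c$. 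Hence the average is too, giving the exact inequality with no asymptotics and no reference to $u$.
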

\begin{proof}
	Fix an arbitrary $S_0\in\mathcal{S}_n$ and choose an $h\in\mathcal{H}_\text{all}$ uniformly at random. Then each of the $n$ keys in $S_0$, hashed one after another, has the same probability of $1/m$ to be hashed to a given cell $k$.
	Note that this is equivalent to the definition of the $T_k$. Moreover, let $\chi(h,S)$ denote the characteristic function of $c$-ideality, which is $1$ if $\max\limits_{\!k\in\{1,\ldots,m\}\!}\{|h^{-1}(k)\cap S|\}\le c\alpha$ and $0$ otherwise.
	\begin{align*}
	&\hphantom{{}={}}\mathbb{P}(T_\textnormal{max}\le c\alpha)\\
	\cmt{Definitions of $\chi$ and $T_k$}&=\mathop{\mathbb{P}}\limits_{h\in\mathcal{H}_\text{all}}(\chi(h,S_0)=1)\\
	\cmt{Definition of $\mathbb{E}$}&=\mathop{\mathbb{E}}\limits_{h\in\mathcal{H}_\text{all}}[\chi(h,S_0)]\\
	\cmt{Valid for arbitrary $S_0\in\mathcal{S}_n$}&=\mathop{\mathbb{E}}\limits_{S\in\mathcal{S}_n}\left[\mathop{\mathbb{E}}\limits_{h\in\mathcal{H}_\text{all}}[\chi(h,S)]\right]\\
	\cmt{Definition of $\mathbb{E}$}&=\sum_{S\in\mathcal{S}_n}\mathbb{P}(S)\sum_{h\in\mathcal{H}_\text{all}}\mathbb{P}(h)\cdot\chi(h,S)\\
	\cmt{Swap sums}&=\sum_{h\in\mathcal{H}_\text{all}}\mathbb{P}(h)\sum_{S\in\mathcal{S}_n}\mathbb{P}(S)\cdot\chi(h,S)\\
	\cmt{Definition of $\mathbb{E}$}&=\mathop{\mathbb{E}}\limits_{h\in\mathcal{H}_\text{all}}\left[\mathop{\mathbb{E}}\limits_{S\in\mathcal{S}_n}\left[\chi(h,S)\right]\right]
	\end{align*}

	For every $h\in\mathcal{H}_\text{all}$, we have
	$\sum_{S\in\mathcal{S}_n}\chi(h,S)\le M_c$ by the definition of $M_c$, whence
	\[\mathop{\mathbb{E}}\limits_{S\in\mathcal{S}_n}[\chi(h,S)]=\mathop{\mathbb{P}}\limits_{S\in\mathcal{S}_n}(\chi(h,S)=1)=\frac{\sum_{S\in\mathcal{S}_n}\chi(h,S)}{\sum_{S\in\mathcal{S}_n}1\hphantom{(h,S)}}\le\frac{M_c}{|\mathcal{S}_n|}.\]
	In particular, the inequality still holds true for the expected value over all $h\in\mathcal{H}_\text{all}$. Together with the equality derived above we conclude
	\begin{align*}
	\mathbb{P}(T_\textnormal{max}\le c\alpha)=\mathop{\mathbb{E}}\limits_{h\in\mathcal{H}_\text{all}}\left[\mathop{\mathbb{E}}\limits_{S\in\mathcal{S}_n}\left[\chi(h,S)\right]\right]\le\frac{M_c}{|\mathcal{S}_n|}.
	\end{align*}
\end{proof}

For the counterpart to \cref{thm:excessprobability}, we use Poissonization to turn the binomial variables into Poisson variables.

The next lemma shows that $(T_1,\ldots,T_m)$ has the same probability mass function as $(Y_1,\ldots,Y_n)$ under the condition $Y=n$.

\begin{lemma}[Sum Conditioned Poisson Variables]\label{lm:distributionequal}
	For any natural numbers $\ell_1,\ldots,\ell_m$ with $\ell_1+\ldots+\ell_m= n$, we have that 
	\[\mathbb{P}((Y_1,\ldots,Y_m)=(\ell_1,\ldots,\ell_m)\mid Y=n)=\mathbb{P}((T_1,\ldots,T_m)=(\ell_1,\ldots,\ell_m)).\]
\end{lemma}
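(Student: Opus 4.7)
The plan is to verify the claimed identity by a direct, routine calculation of both sides. Since the $Y_i$ are independent Poisson variables with mean $\alpha$ (set up in the Poissonization technique so that the sum $Y = Y_1 + \cdots + Y_m$ is Poisson with mean $m\alpha = n$), the conditional distribution should turn out to be multinomial with parameters $n$ and $1/m, \ldots, 1/m$, which is exactly the distribution of $(T_1,\ldots,T_m)$.

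First I would write the left-hand side using the definition of conditional probability:
\[\mathbb{P}((Y_1,\ldots,Y_m)=(\ell_1,\ldots,\ell_m)\mid Y=n) = \frac{\mathbb{P}((Y_1,\ldots,Y_m)=(\ell_1,\ldots,\ell_m))}{\mathbb{P}(Y=n)},\]
using the assumption $\ell_1 + \cdots + \ell_m = n$ so that the event in the numerator already implies $Y=n$. By independence of the $Y_i$ and the Poisson PMF, the numerator equals $\prod_{i=1}^m \euler^{-\alpha}\alpha^{\ell_i}/\ell_i! = \euler^{-m\alpha}\alpha^{n}/\prod_i \ell_i!$. Since sums of independent Poisson variables are Poisson with the summed mean, $Y \sim P_{m\alpha}$, so $\mathbb{P}(Y=n) = \euler^{-m\alpha}(m\alpha)^n/n!$.

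Next I would simplify the ratio. The factors $\euler^{-m\alpha}$ cancel, and so do the $\alpha^n$ factors (using $(m\alpha)^n = m^n \alpha^n$), leaving
\[\frac{n!}{\ell_1!\cdots\ell_m!\, m^n} = \binom{n}{\ell_1,\ldots,\ell_m}\frac{1}{m^n},\]
which is precisely the multinomial PMF $\mathbb{P}((T_1,\ldots,T_m)=(\ell_1,\ldots,\ell_m))$ as recalled in the paragraph just before the lemma. This finishes the argument.

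There is no genuine obstacle; the only thing to be slightly careful about is being consistent with the parameter $\alpha = n/m$ and the fact that the hypothesis $\sum_i \ell_i = n$ lets us identify the joint event $\{Y_i = \ell_i, \forall i\}$ with $\{Y_i = \ell_i, \forall i\} \cap \{Y=n\}$, so that the numerator in the conditional probability is just the unconditional joint probability. Everything else is bookkeeping with the Poisson and multinomial mass functions.
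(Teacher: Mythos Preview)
Your proposal is correct and follows essentially the same direct calculation as the paper: both expand the conditional probability via independence and the Poisson PMF, use $Y\sim P_{m\alpha}=P_n$, cancel the common factors, and identify the result with the multinomial mass function. The only cosmetic difference is that you explicitly justify why the joint event already forces $Y=n$, whereas the paper takes this as understood.
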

\begin{proof}
	\begin{align*}
	\mathbb{P}((Y_1,\ldots,Y_m)=(\ell_1,\ldots,\ell_m)\,\mid\,{Y}=n)&=\frac{\mathbb{P}(Y_1=\ell_1)\cdot\ldots\cdot \mathbb{P}(Y_m=\ell_m)}{\mathbb{P}({Y}=n)}\\
	\cmt{$Y_1,\ldots,Y_m\sim P_\alpha$ and ${Y}\sim P_n$}&=\frac{\frac1{\euler^\alpha}\frac{\alpha^{\ell_1}}{\ell_1!}\cdot\ldots\cdot \frac1{\euler^\alpha}\frac{\alpha^{\ell_m}}{\ell_m!}}{\frac1{\euler^n}\frac{n^n}{n!}}\\
	\cmt{Cancel using $\alpha m=n$}&=\frac{\frac{\alpha^{\ell_1+\ldots+\ell_m}}{\ell_1!\cdot\ldots\cdot \ell_m!}}{\frac{n^n}{n!}}\\
	\cmt{Eliminate double fraction}&=\frac{\alpha^{\ell_1+\ldots+\ell_m}\cdot n!}{\ell_1!\cdot\ldots\cdot \ell_m!\cdot n^n}\\
	\cmt{$\ell_1+\ldots+\ell_m=n$ and $\frac n\alpha=m$}&=\frac{n!}{\ell_1!\cdot\ldots\cdot \ell_m!\cdot m^n}\\
	\cmt{Multinomial $\binom{n}{\ell_1,\ldots,\ell_m}\coloneqq\frac{n!}{\ell_1!\cdot\ldots\cdot \ell_m!}$}&=\frac{\binom{n}{\ell_1,\ldots,\ell_m}}{m^n}\\
	\cmt{Definition of multinomial distribution}&=\mathbb{P}((T_1,\ldots,T_m)=(\ell_1,\ldots,\ell_m))\\
	\end{align*}
	
\end{proof}
The result of \cref{lm:distributionequal} immediately carries over to the conditioned expected value for any real function $f(\ell_1,\ldots,\ell_m)$, by the definition of expected values.

\begin{corollary}[Conditioned Expected Value]\label{lm:generalpoisson}
	Let $f(\ell_1,\ldots,\ell_m)$ be any real function. We have
	$\mathbb{E}\left[f(Y_1,\ldots,Y_m)\mid Y=n\right]=\mathbb{E}\left[f(T_1,\ldots,T_m)\right].$
\end{corollary}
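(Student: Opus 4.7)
The statement is a direct corollary of \cref{lm:distributionequal}, obtained by unpacking the definition of expectation. My plan is as follows. First, I would expand the conditional expectation on the left-hand side as
\[
\mathbb{E}\left[f(Y_1,\ldots,Y_m)\mid Y=n\right] = \sum_{(\ell_1,\ldots,\ell_m)} f(\ell_1,\ldots,\ell_m)\,\mathbb{P}\!\left((Y_1,\ldots,Y_m)=(\ell_1,\ldots,\ell_m)\mid Y=n\right),
\]
where the sum runs over all tuples of non-negative integers. Because $Y = Y_1+\ldots+Y_m$, conditioning on $Y=n$ assigns probability zero to every tuple whose entries do not sum to $n$, so only tuples with $\ell_1+\ldots+\ell_m = n$ make a non-vanishing contribution.

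Next, I would apply \cref{lm:distributionequal} term by term on this restricted index set, replacing each conditional probability by $\mathbb{P}((T_1,\ldots,T_m)=(\ell_1,\ldots,\ell_m))$. Since the multinomial vector $(T_1,\ldots,T_m)$---representing $n$ balls thrown uniformly into $m$ bins---has support exactly on tuples summing to $n$, the resulting expression is already the standard expansion of $\mathbb{E}[f(T_1,\ldots,T_m)]$, which closes the argument. There is no genuine obstacle here: the substantive content lies entirely in \cref{lm:distributionequal}, and this corollary is a formal one-line unpacking using only the elementary fact that discrete random vectors with identical probability mass functions yield identical expectations for every real-valued function.
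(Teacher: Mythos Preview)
Your proposal is correct and mirrors the paper's own reasoning: the paper simply remarks that \cref{lm:distributionequal} ``immediately carries over to the conditioned expected value for any real function $f(\ell_1,\ldots,\ell_m)$, by the definition of expected values,'' which is exactly the term-by-term substitution you describe.
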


We are ready to state the counterpart to \cref{thm:excessprobability}.
\begin{lemma}[Lower Bound on Non-Excess Probability]\label{thm:upperbound}
	Set $d=c\alpha$. We have \[\P(T_\textnormal{max}\le d) \ge \frac{\sqrt{2\pi n}}{(2\pi d)^{m/(2c)}}\frac{1}{c^n}\frac{1}{\euler^{\frac{m}{12cd}}}(\alpha+1)^{m(1-\frac1c)}.\]
\end{lemma}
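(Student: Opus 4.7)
The plan begins by invoking \cref{lm:distributionequal}, which identifies $\P(T_\textnormal{max} \le d)$ with $\P(Y_\textnormal{max} \le d \mid Y = n) = \P(Y_\textnormal{max} \le d,\, Y = n)/\P(Y = n)$, where the $Y_i$ are independent Poisson variables with mean $\alpha$ and $Y \coloneqq Y_1 + \cdots + Y_m$. I will bound the denominator from above by the standard Stirling inequality $n! \ge \sqrt{2\pi n}(n/e)^n$, which gives $\P(Y = n) = e^{-n}n^n/n! \le 1/\sqrt{2\pi n}$ and thus produces the $\sqrt{2\pi n}$ factor. The main work is to lower-bound the numerator by summing the Poisson joint probability $e^{-n}\alpha^n/\prod_i \ell_i!$ over a well-chosen family of tuples $(\ell_1,\ldots,\ell_m)$ with $\max_i \ell_i \le d$ and $\sum_i \ell_i = n$.

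To construct this family, I will partition the $m$ cells into a \emph{heavy} group of size $m_1 \coloneqq m/c$ and a \emph{light} group of size $m_2 \coloneqq m(1-1/c) = m_1(c-1)$, then further partition the light group into $m_1$ disjoint blocks $B_1,\ldots,B_{m_1}$ of size $c-1$ each, pairing block $B_i$ with heavy cell $i$. For every tuple $\vec b = (b_j)_{j > m_1} \in \{0,1,\ldots,\alpha\}^{m_2}$, I define the configuration by setting $\ell_j = b_j$ in the light group and $\ell_i = d - \sum_{j \in B_i} b_j$ in the heavy group. Because $\sum_{j \in B_i} b_j \le (c-1)\alpha \le d$, every heavy value lies in $[0,d]$; the total load equals $\sum_i \ell_i = m_1 d = n$; and distinct tuples $\vec b$ yield distinct configurations, giving $(\alpha+1)^{m_2}$ valid configurations.

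By independence of the $Y_i$, each configuration contributes probability $e^{-n}\alpha^n/\bigl[\prod_i (d-\sum_{j \in B_i} b_j)!\,\prod_j b_j!\bigr]$. The sum over $\vec b$ factorizes across the blocks as
\[\P(Y_\textnormal{max} \le d,\, Y = n) \ge \frac{e^{-n}\alpha^n}{(d!)^{m_1}}\prod_{i=1}^{m_1} \sum_{(b_j)_{j \in B_i} \in \{0,\ldots,\alpha\}^{c-1}} \binom{d}{(b_j)_{j \in B_i},\, d - \sum_{j \in B_i} b_j}.\]
Each multinomial coefficient is at least $1$ for non-negative arguments summing to $d$, and each inner sum ranges over $(\alpha+1)^{c-1}$ tuples, so the whole product is at least $(\alpha+1)^{m_1(c-1)} = (\alpha+1)^{m_2}$.

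Combining the numerator lower bound $e^{-n}\alpha^n(\alpha+1)^{m_2}/(d!)^{m_1}$ with the denominator bound and applying Stirling's upper bound $d! \le \sqrt{2\pi d}(d/e)^d e^{1/(12d)}$ to control $(d!)^{m_1}$, then using the identities $m_1 d = n$ and $\alpha/d = 1/c$, the expression collapses to exactly the claimed bound. The hard part is the combinatorial design of the configuration family: choosing blocks of size $c-1$ is precisely what guarantees that every one of the $(\alpha+1)^{m_2}$ tuples $\vec b$ yields a valid configuration (each heavy adjustment stays in $[0,d]$) and that the Poisson product factorizes cleanly over the blocks. Trivially lower bounding each multinomial coefficient by~$1$ is wasteful, but it is precisely the slack the claimed bound permits.
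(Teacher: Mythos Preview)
Your proof is correct and reaches exactly the same numerator bound $e^{-n}\alpha^{n}(\alpha+1)^{m(1-1/c)}/(d!)^{m/c}$ that the paper obtains, but by a genuinely different and more elementary route. The paper separates the Poisson sum into (number of admissible tuples)\,$\times$\,(minimum summand): it lower-bounds the count by the same free-choice argument you use, but it needs the technical \cref{lem:technicallysophisticated}---a Lagrange-multiplier argument on $\Gamma$ involving Cauchy--Schwarz on an integral inner product---to show that $\min_{\Omega}\prod_i 1/\ell_i!=1/(d!)^{m/c}$. Your block construction sidesteps that lemma entirely: by pairing each heavy cell with its own block of $c-1$ light cells, the sum over your explicit family factorizes over blocks, and within each block the identity $\prod_j 1/\ell_j! = \tfrac{1}{d!}\binom{d}{\,\cdots\,}$ reduces everything to the trivial inequality $\binom{d}{\,\cdots\,}\ge 1$. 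The price you pay is the same slack the paper pays (both throw away the size of the multinomial/product), so the final bound is identical; what you buy is an entirely combinatorial argument with no calculus on $\Gamma$. Note that, like the paper, you are implicitly assuming $m/c$, $c$, and $\alpha$ are integers; this matches the level of rigor in the paper's own proof.
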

\begin{proof}
	We only sketch the proof. Using \cref{lm:generalpoisson}, we can formulate this probability with Poisson variables:
	\begin{align*}
	\mathbb{P}(T_\textnormal{max}\le d) &= \mathbb{E}[\chi_{T_\textnormal{max}\le d}]\\
	\cmt{\Cref{lm:generalpoisson}}&= \mathbb{E}[\chi_{\max\{Y_1,\ldots,Y_m\}\le d} \mid Y=n]\\
	\cmt{Definition of conditional probability}&= \frac{\mathbb{E}[\chi_{\{Y=n\}}\chi_{\max\{Y_1,\ldots,Y_m\}\le d}]}{\mathbb{P}(Y=n)}.
	\end{align*}
	We know that $Y \sim P_n$; hence, the denominator equals $n^n / (n! \euler^n)$.
	For the numerator, we use the fact that $Y_i \sim P_\alpha$ for all $i$ to write:
	\begin{align*}
	\mathbb{E}[\chi_{\{Y=n\}}\chi_{\max\{Y_1,\ldots,Y_m\}\le d}] &= \sum_{\substack{
			\ell_1,\ldots,\ell_m\in\N\\
			\ell_1+\ldots+\ell_m = n\\
			\ell_1,\ldots,\ell_m \leq d}} \prod_{i=1}^m \frac{1}{\euler^\alpha} \frac{\alpha^{\ell_i}}{\ell_i!}
	\end{align*}
	We estimate the sum by finding a lower bound for both the number of summands and the products that constitute the summands. The product attains its minimum when the $\ell_i$ are distributed as asymmetrically as possible, that is, if almost all $\ell_i$ are set to either $d$ or $0$, with only one being $(n\mod d)$. This fact is an extension of the simple observation that for any natural number $n$, we have
	\[n!n! = n(n-1)(n-2)\cdot\ldots\cdot 1 \cdot n! < 2n(2n-1)(2n-2)\cdot\ldots\cdot(n+1) \cdot n! = (2n)!.\]
	A rigorous proof is obtained by extending the range of the expression to the real numbers, taking the derivative, and analyzing the extremal values.
	\begin{lemma}\label{lem:technicallysophisticated}
		We have
		\begin{align*}
		\min_\Omega \left\{\prod_{i=1}^m \frac{1}{\ell_i!}\right\}=1/(d!)^{m/c},
		\end{align*}
		where $\Omega\coloneqq\{(\ell_1,\ldots,\ell_m)\in\R^m;\ \ell_1+\ldots+\ell_m=n\wedge 0\le\ell_1,\ldots,\ell_m\le d\}$.
	\end{lemma}
	\begin{proof}
 	Let $f(\ell_1,\ldots,\ell_m)\coloneqq\prod_{i=1}^m \frac{1}{\ell_i!}$. Note that the relaxation used in the definition of $f$ from $\ell_1,\ldots,\ell_m\in\N$ to $0\le\ell_1,\ldots,\ell_m\in\R$ can only increase the minimization range, decreasing the entire term further.
 	
 	First, we use that $f$ is extremal exactly if its logarithm $g\coloneqq \log(1/f)=\sum_{i=1}^m\log \ell_i!$ is extremal.
 	We now imagine that this is a hyperplane in $m$ dimensions, and we want to take the derivative 
 	and set it to zero in order to determine the minimum.
 	The derivative in all dimensions corresponds to the gradient, which is
 	$\nabla g=(\partial_{\ell_1}\log \ell_1!,\ldots,\partial_{\log \ell_m}\log \ell_m!)$.
 	
 	To calculate the derivative of a factorial, we use the Gamma function, which is defined as
 	$\Gamma(z)=\int_0^\infty t^{z-1}\euler^{-t}\,\textrm{d}t$ and which satisfies
 	$\Gamma(z)=(z-1)!$.
 	The components of the gradient are thus the logarithmic derivative of $\Gamma(z)$.
 	The derivative of the Gamma function is
 	\[\Gamma'(z) = \int_0^\infty \partial_z t^{z-1}\euler^{-t}\,\textrm{d}t = \int_0^\infty \partial_z \euler^{(z-1)\ln(t)}\euler^{-t}\,\textrm{d}t = \int_0^\infty \ln(t)t^{z-1}\euler^{-t}\,\textrm{d}t.\]
 	These values are strictly monotonically increasing because the second derivative, which is
 	\[\partial_z^2\log z!=\partial_z\frac{\Gamma'(z)}{\Gamma(z)}=\frac{\Gamma(z)\Gamma''(z)-\Gamma'(z)\Gamma'(z)}{\Gamma(z)^2},\]
 	is strictly positive: To see that the numerator is positive, we use the Cauchy-Schwarz inequality, which states
 	\[\langle v,v\rangle \langle w,w\rangle - \langle v,w\rangle ^2 \ge 0 .\]
 	We can then take the scalar product on functionals
 	\[\langle v,w\rangle(z)\coloneqq\int_0^\infty v(t)w(t)\frac{t^{z-1}}{\euler^t}\,\textrm{d}t\]
 	with $v=1$ and $w=\ln$ and obtain exactly 
 	$\Gamma(z)\Gamma''(z)-\Gamma'(z)\Gamma'(z) = \langle v,v\rangle\langle w,w\rangle-\langle v,w\rangle\langle w,v\rangle \ge 0$.
 	
 	The normal vector $(1,\ldots,1)$ of the hyperplane defined by the side condition $\ell_1+\ldots+\ell_m-n=0$ is collinear with the gradient for $\ell_1=\ldots=\ell_m=\alpha$ and nowhere else by the proven monotonicity.
 	This shows that $f$ has only a single inner extremum with value $1/\alpha^m$ attained at $\ell_1=\ldots=\ell_m=\alpha$, since the projection of the gradient onto the hyperplane---which gives the direction of steepest ascent under the given restrictions---vanishes if and only if it is perpendicular to the hyperplane, that is, collinear with the normal vector.
 	
 	We must also check the boundary for extrema.
 	Recall from the statement of the lemma that our area is
 	$\Omega\coloneqq\{(\ell_1,\ldots,\ell_m)\in\R^m;\ \ell_1+\ldots+\ell_m=n\wedge 0\le\ell_1,\ldots,\ell_m\le d\}$.
 	 All candidates $(\ell_1,\ldots,\ell_m)\in\partial\Omega$---that is, on the boundary of our 
 	 area---have at least one $\ell_i$ equal to either zero or $d$. 
 	 We can now repeat our entire argument with the new area being the boundary of $\Omega$ and obtain that again, the extremal value is exactly if all $\ell_i$ are equal except for the new boundary.
 	 
 	 Iterating the entire argument $m$ times shows that the global boundary extrema are attained whenever $\ell_1,\ldots,\ell_m\in\{0,d\}$. Under this assumption, the restriction $\ell_1+\ldots+\ell_m=n$ implies that $\ell_i=d$ for exactly $n/d=m/c$ of the $m$ indices $i\in\{1,\ldots,m\}$ and $\ell_i=0$ for the rest, yielding the global minimal value $(d!)^{m/c}$.
	\end{proof}

	The next step is to find a lower bound on the number of summands, that is, the number of integer compositions $\ell_1+\ldots+\ell_m=n$.
	As we mentioned at the very end of \cref{sec:c_ideal}, we are not aware of strong approximation for this number and the following estimation is very crude for many values of $n$, $m$, and $c$:
	If we let the first $m- \frac{m}{c}$ integers, $\ell_1,\ldots,\ell_\frac{m}{c}$, range freely between $0$ and $\alpha$, we can always ensure that the condition $\ell_1+\ldots+\ell_m = n$ is satisfied by choosing the remaining $\frac{m}{c}$ summands $\ell_{\frac{m}{c}+1},\ldots,\ell_m$ appropriately.
	Therefore, the number of summands is at least $(\alpha+1)^{m(1-\frac1c)}$. 
	Together with the previous calculation, this leads to
	\begin{align*}
	\mathbb{E}[\chi_{\{Y=n\}}\chi_{\max\{Y_1,\ldots,Y_m\}\le d}] &\ge \frac{\alpha^n}{\euler^n} \frac{(\alpha+1)^{m(1-\frac1c)}}{(d!)^{m/c}}.
	\end{align*}
	Putting everything together and using the Stirling bounds~\cite{stirling} for the factorials, we obtain the desired result.
\end{proof}

Combining the results in this section in the straightforward way yields the following upper bound on $H_c$.
\begin{theorem}[Upper Bound on Minimal Family Size $\bm{H_c}$]\label{cor:upperbound_H_c}
	The minimal number of hash functions in a $c$-ideal family of hash functions is bounded from above by
	\[H_c \le \upperboundHc.\]
\end{theorem}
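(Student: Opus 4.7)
The plan is to chain the three main results of this section into a single estimate. The starting point is the upper half of \cref{cor:mainbounds}, which already delivers the inequality
\[H_c \;\le\; \frac{n \ln u}{\P(\alpha_\textnormal{max} \le c\alpha)},\]
so that the task reduces to lower-bounding the probability in the denominator. All of the conceptual work for that lower bound has been completed in the earlier lemmas of this subsection; the theorem is really a packaging statement.

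The next step is to replace the hypergeometric quantity $\P(\alpha_\textnormal{max} \le c\alpha)$ by a multinomial (equivalently, Poissonized-and-reconditioned) quantity that is amenable to explicit estimation. This is exactly what \cref{thm:replacement2} provides: it shows $\P(\alpha_\textnormal{max} \le c\alpha) \ge \P(T_\textnormal{max} \le c\alpha)$, so substitution gives
\[H_c \;\le\; \frac{n \ln u}{\P(T_\textnormal{max} \le c\alpha)}.\]
From here, I apply \cref{thm:upperbound} with $d = c\alpha$, which furnishes an explicit lower bound on $\P(T_\textnormal{max} \le c\alpha)$ assembled from the Poissonization identity of \cref{lm:distributionequal}, the extremal-product analysis of \cref{lem:technicallysophisticated}, the crude composition count $(\alpha+1)^{m(1-1/c)}$, and Stirling's estimates. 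Inverting that bound and multiplying by $n \ln u$ yields the closed-form expression announced as $\upperboundHc$.

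I do not anticipate any real obstacle here; the only step that needs any care is the algebraic bookkeeping at the end, namely matching the constants and exponents: the $\sqrt{2\pi n}$ and $(\alpha+1)^{m(1-1/c)}$ from \cref{thm:upperbound} migrate to the denominator after inversion, the $c^n$, $(2\pi d)^{m/(2c)}$, and $\euler^{m/(12cd)}$ factors migrate to the numerator, and the leading $n \ln u$ combines with the $\sqrt{2\pi n}$ into the $\sqrt{n}\ln u$ prefactor. Once these pieces are arranged, the stated bound appears immediately, completing the proof.
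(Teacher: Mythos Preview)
Your proposal is correct and follows essentially the same route as the paper's proof: start from the probability bound (the paper cites \cref{thm:probabilitybound2} directly rather than its repackaging as \cref{cor:mainbounds}, but these are the same inequality), pass from $\P(\alpha_\textnormal{max}\le c\alpha)$ to $\P(T_\textnormal{max}\le c\alpha)$ via \cref{thm:replacement2}, insert the explicit lower bound of \cref{thm:upperbound} with $d=c\alpha$, and then regroup using $n=m\alpha$ to obtain the stated closed form. Your description of the algebraic bookkeeping matches the paper's final display.
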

\begin{proof}
	We start with the probability bound from \cref{thm:probabilitybound2} and use \cref{thm:replacement2} and \cref{thm:upperbound}.
	\begin{align*}
	H_c&\le\left\lceil\frac {|\mathcal{S}_n|}{M_c}n\ln u\right\rceil\\
	\cmt{\Cref{thm:replacement2}}&\le \left\lceil\frac {1}{\mathop{\mathbb{P}}(T_\textnormal{max}\le c\alpha)}n\ln u\right\rceil\\
	\cmt{\Cref{thm:upperbound} and $d=c\alpha$}&\le\left\lceil
	\frac{(2\pi c\alpha)^{m/(2c)}}{\sqrt{2\pi n}}c^n\frac{\euler^{\frac{m}{12c^2\alpha}}}{(\alpha+1)^{m(1-\frac1c)}}
	n\ln u\right\rceil\\
	\cmt{$n=m\alpha$}&=\left\lceil
	\left((2\pi c\alpha)^{1/(2c)}\left(c^\alpha\right)\frac{\euler^{\frac{1}{12c^2\alpha}}}{(\alpha+1)^{1-\frac1c}}\right)^m
	\sqrt{\frac{n}{2 \pi}}\ln u\right\rceil
	\laterdef{\upperboundHc}{\left\lceil
		\left(\sqrt{2\pi c\alpha}^{1/c}c^\alpha\frac{\euler^{\frac{1}{12c^2\alpha}}}{(\alpha+1)^{1-\frac1c}}\right)^m
		\sqrt{\frac{n}{2 \pi}}\ln u\right\rceil}
	\laterdef{\upperboundHcOnotation}{\mathcal{O}\left(
		\ln{\ln{u}} + \ln{n} + m\left(\frac1{2c}\log(2\pi c\alpha)+\alpha\log(c)+\frac1{12c^2\alpha}-\left(1-\frac1c\right)\log(\alpha+1)
		\right)
		\right)}\\
	\end{align*}
\end{proof}
\Cref{cor:upperbound_H_c} states that for constant $\alpha$, the number of functions such that for each set of keys of size $n$, there is a function that distributes this set among the hash table cells at least as good as $c$ times an optimal solution is bounded from above by a number that grows exponentially in $m$, but only with the square root in $n$ and only logarithmically with the universe size.
However, our bounds do not match, hence the exact behavior of $H_c$ within the given bounds remains obscure.
It becomes easier if we analyze the advice complexity, using the connection between $c$-ideality and advice complexity described in the introduction of this paper.
We first improve our bounds on $H_c$ for some edge cases in the next section before we then use and interpret the results in \cref{sec:adv_comp}.

\section{Improvements For Edge Cases}\label{sec:edgecases}
As discussed in the introduction, the size $u$ of the universe does not appear in the lower bound on $H_c$ of~\cref{cor:mainbounds}. 
The following lower bound on $H_c$, which is a straightforward generalization of an argument presented in the classical textbook by Mehlhorn~\cite{mehlhorn}, features $u$ in a meaningful way.

\begin{theorem} \label{lowerboundthm1}
	We have that \[ H_c \ge \frac{\ln(u) - \ln(c\alpha)}{\ln(m)}.\]
\end{theorem}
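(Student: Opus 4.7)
The plan is to use a pigeonhole/signature argument in the spirit of Mehlhorn's classical lower bound. The idea is that a family $\mathcal{H}$ of $k$ hash functions associates to each key $x \in U$ a \emph{signature} $\sigma(x) \coloneqq (h_1(x),\ldots,h_k(x)) \in T^k$, and any set of keys sharing the same signature must be mapped to the same cell by every function in $\mathcal{H}$ simultaneously.

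First, I would fix an arbitrary $c$-ideal family $\mathcal{H} = \{h_1,\ldots,h_k\}$ of size $k = |\mathcal{H}|$ and define the signature map $\sigma\colon U \to T^k$ as above. There are at most $m^k$ possible signatures. By the pigeonhole principle, some signature $\tau \in T^k$ is attained by at least $\lceil u/m^k \rceil$ distinct keys of $U$.

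Second, I would argue that if this multiplicity ever exceeds $c\alpha$, then $\mathcal{H}$ cannot be $c$-ideal. Concretely, suppose $\lfloor c\alpha\rfloor + 1$ keys $x_1,\ldots,x_{\lfloor c\alpha\rfloor+1}$ share the signature $\tau = (\tau_1,\ldots,\tau_k)$. Since $|U| \ge n$, we can extend these keys arbitrarily to a set $S \in \mathcal{S}_n$ of size $n$. For every $j \in [k]$, all of $x_1,\ldots,x_{\lfloor c\alpha\rfloor+1}$ are hashed by $h_j$ to the common cell $\tau_j$, so $\cost(h_j,S) \ge \lfloor c\alpha\rfloor + 1 > c\alpha$. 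Hence no $h_j \in \mathcal{H}$ is $c$-ideal for $S$, contradicting the assumption that $\mathcal{H}$ is $c$-ideal.

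Consequently, every signature must have at most $c\alpha$ preimages, which forces $u \le m^{k} \cdot c\alpha$. Solving for $k$ yields
\[
k \;\ge\; \log_m\!\left(\frac{u}{c\alpha}\right) \;=\; \frac{\ln u - \ln(c\alpha)}{\ln m},
\]
and since this holds for every $c$-ideal family, it holds in particular for the smallest one, giving the claimed lower bound on $H_c$. There is no real obstacle here; the only point to be careful about is the trivial extension from $\lfloor c\alpha\rfloor+1$ colliding keys to a full set of size $n$, which is possible by our standing assumption $|U|\ge n^2 \ge n$.
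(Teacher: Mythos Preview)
Your proof is correct and essentially the same as the paper's: the paper uses an iterated pigeonhole---each $h_i$ in turn splits the previously found set of indistinguishable keys among $m$ cells---but this is exactly your signature argument unrolled, since $\bigcap_{i} h_i^{-1}(k_i)=\sigma^{-1}(k_1,\ldots,k_k)$. Both arrive at a set of at least $u/m^{|\mathcal H|}$ keys that collide under every $h\in\mathcal H$, and both conclude by observing that $c$-ideality forces this number to be at most $c\alpha$.
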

\begin{proof}
	Denote the hash functions in the given family $\mathcal{H}$ by $h_1,\ldots,h_{|\mathcal{H}|}$, in an arbitrary order. Consider $h_1:U \to T$. Since it splits up $u$ keys among $m$ cells, there exists a cell $k_1\in T$ with at least $\frac{u}{m}$ keys, that is, $|h_1^{-1}(k)|\ge\frac um$.
	In other words, at least $\frac um$ of the keys are indistinguishable under the hash function $h_1$.
	Consider now $h_2:U \to T$ and its behavior on the keys from $h_1^{-1}(k)$. By the same argument as before, $h_2$ cannot do better but to split them up evenly among the $m$ cells, so there exists a cell $k_2\in T$ that is allotted at least an $m$th of these $|h_1^{-1}(k)|$ keys. Hence, $|h_1^{-1}(k_1)\cap h_2^{-1}(k_2)|\ge \frac{u}{m^2}$. Iterating this argument through all $|\mathcal{H}|$ functions in the given family, there is a set of at least
	\[ d\coloneqq \left|\bigcap_{i=1}^{|\mathcal{H}|}h_i^{-1}(k_i)\right|\ge \frac{u}{m^{|\mathcal{H}|}}\]
	keys in the universe that are indistinguishable under all functions in $\mathcal{H}$. By definition, $\mathcal{H}$ can only be $c$-ideal if $d\le c\alpha$. By transforming the equation above, this leads to
	\[|\mathcal{H}| \ge \frac{\ln(u)-\ln(c\alpha)}{\ln(m)}.\]	
\end{proof}

This demonstrates that, while it is easy to find bounds that include the size of the universe, it seems to be very difficult to incorporate $u$ into a general bounding technique that does not take it into account naturally, such as the first inequality of~\cref{cor:mainbounds}.
However, it is not too difficult to obtain bounds that improve upon~\cref{cor:mainbounds} for large---possibly less interesting---values of $c$.

\begin{theorem}[Yao Bound]\label{upperbound_yao}
	For every $c\in\omega\left(t\frac{\ln n}{\ln\ln n}\right)$, where $t\ge 1$, we have
	$H_c\in\mathcal{O}\left(\ln |\mathcal{S}_n| /\ln t \right)$. In particular, we obtain that $H_c\in\mathcal{O}\left(n\ln u\right)$ for $t\in\mathcal{O}(1)$.
	\laterdef{\upperboundHcYao}{\frac{\ln |\mathcal{S}_n|}{\ln t}}
\end{theorem}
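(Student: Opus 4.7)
The plan is to apply the probability bound from \cref{thm:probabilitybound2} in the regime where $p \coloneqq \P(\alpha_\textnormal{max} \le c\alpha)$ is very close to~$1$. Concretely, I would invoke the sharper intermediate estimate that appears inside the proof of \cref{thm:probabilitybound2},
\[H_c \le 1 + \left\lfloor \frac{-\ln |\mathcal{S}_n|}{\ln(1-p)} \right\rfloor,\]
so it suffices to show $1 - p = \P(\alpha_\textnormal{max} > c\alpha) \le 1/t$ eventually. For then $-\ln(1-p) \ge \ln t$, and the bound immediately becomes $H_c \le 1 + \ln|\mathcal{S}_n|/\ln t \in \mathcal{O}(\ln|\mathcal{S}_n|/\ln t)$.

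To bound the complementary probability, I would first switch to independent variables via \cref{thm:replacement2}, which yields $\P(\alpha_\textnormal{max} > c\alpha) \le \P(T_\textnormal{max} > c\alpha)$. A union bound and the standard binomial tail estimate then give
\[\P(T_\textnormal{max} > c\alpha) \le m \cdot \P(T_1 > c\alpha) \le m \binom{n}{c\alpha+1}\!\left(\frac{1}{m}\right)^{c\alpha+1} \le m \left(\frac{\euler}{c}\right)^{c\alpha+1}.\]
It remains to verify $m(\euler/c)^{c\alpha+1} \le 1/t$ under the hypothesis $c \in \omega(t \ln n /\ln\ln n)$. Taking logarithms, the inequality becomes $\ln m + \ln t \le (c\alpha+1)(\ln c - 1)$. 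Since $c$ grows strictly faster than $t \ln n /\ln\ln n$, we have $\ln c \ge \ln t + \ln\ln n - \ln\ln\ln n + \omega(1)$ and $c\alpha \ge \alpha t \ln n /\ln\ln n \cdot \omega(1)$, so the right-hand side is of order $\alpha t \ln n \cdot \omega(1)$, which dominates the left-hand side $\ln m + \ln t \le \ln n + \ln t$ for $n$ sufficiently large (using $\alpha, t \ge 1$).

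The supplementary assertion $H_c \in \mathcal{O}(n \ln u)$ for $t \in \mathcal{O}(1)$ then follows from $\ln|\mathcal{S}_n| = \ln\binom{u}{n} \le n \ln u$, which is what already appeared at the end of the proof of \cref{thm:probabilitybound2}. I expect the only delicate point to be the asymptotic bookkeeping in the last step: one must confirm that the ``extra'' $\omega(1)$-factor hidden in the hypothesis on $c$ truly absorbs both the $\ln t$ and $\ln m$ summands even when $t$ is allowed to grow polynomially in $n$, so that the verification of $\ln m + \ln t \le (c\alpha+1)(\ln c-1)$ goes through uniformly. The binomial tail estimate, the union bound, and the passage from $-\ln(1-p)$ to $\ln t$ are completely routine.
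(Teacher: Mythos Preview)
Your argument is correct and essentially equivalent in structure to the paper's, but the two differ in how the key probability estimate $1-p\le 1/t$ is obtained, and this difference is worth noting.

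The paper does not reuse \cref{thm:probabilitybound2}; instead it redoes the same random-covering argument explicitly (pick a hash function that is good on a $(1-1/t)$-fraction of the remaining inputs, iterate). For the probability step, the paper invokes the balls-into-bins result $\mu\coloneqq\mathbb{E}[\alpha_\textnormal{max}]\in\Theta(\ln n/\ln\ln n)$ from~\cite{ballsintobins} as a black box and applies Markov's inequality: $\P(\alpha_\textnormal{max}\ge t\mu)\le 1/t$. Since $c\in\omega(t\ln n/\ln\ln n)$ forces $c\alpha\ge t\mu$ eventually, this gives $1-p\le 1/t$.

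You instead bound the tail directly via the union bound and $\binom{n}{k}\le(\euler n/k)^k$, obtaining $\P(T_\textnormal{max}>c\alpha)\le m(\euler/c)^{c\alpha+1}$, and then check by hand that this drops below $1/t$. This is a bit more work in the asymptotic bookkeeping (as you correctly flag), but it buys you something: your argument is fully self-contained and uniform in $\alpha$, whereas the paper's citation of the $\Theta(\ln n/\ln\ln n)$ expectation is strictly speaking only the $\alpha=\Theta(1)$ case of the Raab--Steger result. Your route also makes the dependence on $\alpha$ explicit, which the paper's Markov step hides inside the black box. Either way, the covering step and the final arithmetic $\ln|\mathcal{S}_n|\le n\ln u$ are identical.
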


\begin{proof}
	We apply the Yao-inspired principle described by Komm~\cite{Komm}, which can be summarized as follows:
	Suppose that a hash function $h\in\mathcal{H}_\text{all}$ is chosen uniformly at random. The expected maximum cell load is $\mu\coloneqq\mathbb{E}[\alpha_\textnormal{max}]\in\Theta\left(\frac{\ln n}{\ln\ln n}\right)$, as proven in~\cite{ballsintobins}.
	The Markov inequality guarantees that at most a fraction $1/t$ of inputs result in a cell load of $t\mu$ or more:
	\[\mathop{\mathbb{P}}\limits_{S\in\mathcal{S}_n}[\alpha_\textnormal{max}\ge t\mu]\le \frac1t.\]
	Therefore, we can find a single hash function $h$ that guarantees a maximal load of at most $t\mu$ for all but a fraction of $1/t$ of all input sets $S\in\mathcal{S}_n$ simultaneously. 
	Now we can repeat the argument as many times as we desire and will find every time a hash function that causes a maximal load of $t\mu$ or more for at most a fraction $1/t$ of inputs for which the previous hash functions did so. 
	After $r$ iterations, there are only $|\mathcal{S}_n|t^{-r}$ input sets left that cause a maximal load of at least $t\mu$. 
	Seeing that 
	\[|\mathcal{S}_n|\left(\frac1t\right)^r< 1\ \Longleftrightarrow\ |\mathcal{S}_n|< t^r\ \Longleftrightarrow\ \frac{\ln|\mathcal{S}_n|}{\ln t}< r,\]
	we can conclude that there is a subfamily of size at most $\lfloor\ln|\mathcal{S}_n|/\ln t\rfloor + 1$ that contains, for every subset $S\in\mathcal{S}_n$, at least one hash function that incurs cell loads of at most $t\mu$.
\end{proof}

\section{Advice Complexity of Hashing}\label{sec:adv_comp}
With the conceptualization of hashing as an ultimate online problem mentioned in the introduction of this paper, we can use the bounds on $H_c$ to provide bounds on the advice complexity of $c$-competitive algorithms.
\Cref{lowerboundthm1} immediately yields the following theorem.
\begin{theorem}\label{cor:adviceeasylowerbound}
	Every \algA for hashing with less than 
	$\ln\left(\ln(u)-\ln(c\alpha)\right)-\ln(\ln(m))$
	advice bits cannot achieve a lower cost than $\cost(\algA) = c\alpha$ and is thus not better than $c$-competitive. In other words, there exists an $S \subseteq U$ such that the output $h:U \to T$ of \algA maps at least $c\alpha$ elements of $S$ to one cell.
\end{theorem}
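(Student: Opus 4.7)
The plan is to combine the equivalence between $c$-competitive online algorithms with advice and $c$-ideal hash families (spelled out in \cref{advicecomplexity}) with the explicit lower bound on $H_c$ supplied by \cref{lowerboundthm1}. The argument is essentially a one-line substitution, so there is no serious technical obstacle; the work is in making the advice-to-family correspondence precise enough to invoke the bound.

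First I would recall that any deterministic algorithm $\algA$ that reads $b$ advice bits can be viewed as a collection of $2^b$ deterministic strategies indexed by the possible advice strings. Since every such strategy, in the hashing setting, must commit to a hash function before seeing any key of $S$, each one produces a fixed hash function $h \in \mathcal{H}_\text{all}$. Hence $\algA$ induces a family $\mathcal{H}_{\algA}$ of at most $2^b$ hash functions. If $\algA$ is $c$-competitive, then for every $S \in \mathcal{S}_n$ the oracle can select advice so that the resulting $h \in \mathcal{H}_{\algA}$ satisfies $\cost(h,S) \le c\lceil\alpha\rceil$, which means $\mathcal{H}_{\algA}$ is $c$-ideal. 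By the definition of $H_c$ as the minimum size of a $c$-ideal family, this forces $2^b \ge H_c$, and therefore $b \ge \log H_c$.

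Next I would substitute the explicit bound $H_c \ge (\ln u - \ln(c\alpha))/\ln m$ from \cref{lowerboundthm1} and take logarithms, yielding the threshold $\ln(\ln u - \ln(c\alpha)) - \ln(\ln m)$ claimed in the statement. Contrapositively, any algorithm whose advice budget is strictly below this threshold yields a family $\mathcal{H}_{\algA}$ that cannot be $c$-ideal; by unfolding \cref{def:c-ideal}, this means there exists an $S \in \mathcal{S}_n$ such that every $h \in \mathcal{H}_{\algA}$ assigns more than $c\alpha$ elements of $S$ to some single cell, which is exactly the conclusion of the theorem.

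The only minor subtlety is the choice of logarithm base. Advice complexity is normally measured in bits, i.e., with $\log_2$, whereas the statement uses $\ln$; the two differ only by a multiplicative constant, which I would either absorb silently or flag with a brief remark so that the reader sees why $\ln$ appears in both the outer and inner logarithms.
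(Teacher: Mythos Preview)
Your proposal is correct and matches the paper's approach exactly: the paper simply states that \cref{lowerboundthm1} ``immediately yields'' the theorem, relying on the advice-to-family correspondence already made explicit in \cref{advicecomplexity}, which is precisely what you spell out. Your remark about the logarithm base is also apt, since the paper silently mixes $\log$ (in the advice discussion) with $\ln$ (in the theorem statement).
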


With a lower bound on the size of $c$-ideal families of hash functions, this bound can be improved significantly, as the following theorem shows.

\begin{theorem}\label{cor:advicecideallowerbound}
	Every \algA for hashing needs at least
	\[\log\left(\lowerboundHc\right)\]
	advice bits in order to be $c$-competitive, for an arbitrary fixed $\varepsilon > 0$.
\end{theorem}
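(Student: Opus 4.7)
The plan is to combine the algorithmic interpretation of $H_c$ laid out in \cref{advicecomplexity} with the quantitative lower bound of \cref{cor:lowerbound_H_c}. The theorem is essentially a repackaging of the latter via the advice-to-family correspondence, so no new analytic machinery is required.

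First I would recall the advice-to-family reduction. A deterministic online hashing algorithm \algA that reads $b$ advice bits realizes at most $2^b$ distinct outputs; since in the hashing setting the algorithm has to commit to a complete hash function $U\to T$ based solely on the advice string, each of these outputs is itself a hash function. If \algA is $c$-competitive then for every $S\in\mathcal{S}_n$ the oracle can hand out an advice string whose associated function $h$ achieves $\cost(h,S)\le c\cdot\cost(\textsc{Opt}(S))=c\lceil\alpha\rceil$, so $h$ is $c$-ideal for $S$ in the sense of \cref{def:c-ideal}. Therefore the family of $2^b$ realized functions is $c$-ideal, and the minimality of $H_c$ forces $2^b\ge H_c$, i.e., $b\ge \log H_c$.

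Second, I would plug in the quantitative lower bound $H_c\ge \lowerboundHc$ supplied by \cref{cor:lowerbound_H_c}, valid for every fixed $\varepsilon>0$. Monotonicity of the logarithm then gives $b\ge \log H_c\ge \log(\lowerboundHc)$, which is the claimed bound.

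The genuine difficulty has already been overcome in \cref{cor:lowerbound_H_c}, whose derivation rests on the volume bound \cref{thm:volumebound} and the tail estimate in \cref{thm:excessprobability}. Here the only remaining bookkeeping is to check that the family assembled across all advice strings lies in $\mathcal{H}_\text{all}$ so that \cref{def:c-ideal} applies verbatim, and to verify that $c$-competitiveness, defined against $\cost(\textsc{Opt}(S))=\lceil\alpha\rceil$, coincides with the $c$-ideality condition $\alpha_\textnormal{max}\le c\alpha$ under the conventions of the paper. Both points are transparent, so I expect no real obstacle in assembling the argument.
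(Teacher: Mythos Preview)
Your proposal is correct and matches the paper's approach exactly: the paper does not even spell out a proof for this theorem, treating it as an immediate consequence of the advice-to-family correspondence established in \cref{advicecomplexity} together with the bound of \cref{cor:lowerbound_H_c}. Your write-up simply makes that implicit argument explicit, and your flagged caveat about $c\lceil\alpha\rceil$ versus $c\alpha$ is the only point worth a footnote, but the paper itself identifies $c$-competitiveness with $c$-ideality in \cref{advicecomplexity}, so under its conventions there is nothing more to check.
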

We want to determine the asymptotic behavior of this bound and hence analyze the term $(\alpha/(c\alpha+1))^{c\alpha+1}$.
We are going to use the fact that $\lim_{n\to \infty}(1-1/n)^n = 1/\euler$.
\begin{align*}
	\left(\frac{\alpha}{c\alpha+1}\right)^{c\alpha+1} &= \left(\frac{\frac 1 c (c\alpha + 1 -1)}{c\alpha+1}\right)^{c\alpha+1} \\
	&= \left(\frac 1 c \left(1-\frac{1}{c\alpha+1}\right)\right)^{c\alpha+1} \\
	\cmt{for any $\varepsilon'>0$ and $c\alpha+1$ large enough} &\ge \left(\frac 1 c\right)^{c\alpha+1}(1-\varepsilon')\frac 1 \euler
\end{align*}
Therefore, the bound from \cref{cor:advicecideallowerbound} is in
\[ \Omega\left( \frac{m}{\euler^{\alpha}}\left(\frac{1}{c}\right)^{c\alpha+1}\right).\]
Before we interpret this result, we turn to the upper bounds. 
\Cref{cor:upperbound_H_c} yields the following result.

\begin{theorem}\label{thm:advicecidealupperbound}
	There is a $c$-competitive algorithm with advice that reads
	\[\log\upperboundHc\]
	\[\in \upperboundHcOnotation\]
	advice bits.
\end{theorem}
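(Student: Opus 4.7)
The plan is to derive this theorem as a direct corollary of \cref{cor:upperbound_H_c} combined with the correspondence between $c$-ideal hash families and online algorithms with advice that was established in \cref{advicecomplexity}. Recall from there that any $c$-ideal family $\mathcal{H}$ induces a $c$-competitive online algorithm whose advice complexity is $\lceil\log|\mathcal{H}|\rceil$, since the advice bits merely need to encode the index of a hash function in $\mathcal{H}$ that is $c$-ideal for the presented input $S$. Choosing a minimum-size $c$-ideal family therefore yields a $c$-competitive algorithm reading exactly $\lceil\log H_c\rceil$ advice bits.

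The first displayed bound in the statement then follows immediately by plugging $H_c \le \upperboundHc$ from \cref{cor:upperbound_H_c} into $\lceil \log H_c \rceil$.

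For the $\mathcal{O}$-expression, I would take the logarithm of $\upperboundHc$ and expand it using the elementary rules $\log(xy)=\log x +\log y$ and $\log(x^a)=a\log x$. The ceiling adds at most one, which is absorbed into the constant. The outer product $\sqrt{n/(2\pi)}\cdot\ln u$ contributes $\tfrac12\log n - \tfrac12\log(2\pi) + \log\ln u$, whose first and third summands give exactly the $\ln n$ and $\ln\ln u$ terms, while the middle summand is a constant. The $m$-th power contributes $m$ times the logarithm of its base, which unfolds factor by factor into the four summands $\tfrac{1}{2c}\log(2\pi c\alpha)$, $\alpha\log c$, $\tfrac{1}{12c^2\alpha}$, and $-(1-\tfrac1c)\log(\alpha+1)$, precisely matching the bracket in the $\mathcal{O}$-expression.

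The only care needed, which is purely bookkeeping rather than a genuine obstacle, is to verify that all additive constants accumulated along the way (from the ceiling, the $\sqrt{2\pi}$ factor, and the implicit constant incurred when switching between $\log$ and $\ln$) are absorbable into the asymptotic bound; this is the case because the $\ln n$ and $\ln\ln u$ summands already dominate them.
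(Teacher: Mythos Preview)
Your proposal is correct and takes exactly the same approach as the paper: the paper presents this theorem as an immediate consequence of \cref{cor:upperbound_H_c} (it literally just writes ``\cref{cor:upperbound_H_c} yields the following result'' before stating the theorem, with no separate proof). Your write-up simply spells out the routine logarithm expansion and the advice-correspondence from \cref{advicecomplexity} that the paper leaves implicit.
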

The factor after $m$ is minimal for $c=\alpha=1$; this minimal value is larger than 1.002.
This upper bound is therefore always at least linear in $m$.
For $c = \omega\left(\frac{\ln(n)}{\ln(\ln(n))}\right)$, we can improve on this and remove the last summand completely, based on \cref{upperbound_yao}.

\begin{theorem}\label{thm:adviceYaoupperbound}
	For $c\in\omega\left(t\frac{\ln n}{\ln\ln n}\right)$, $t> 1$, there exists a $c$-competitive algorithm that reads
	$\mathcal{O}\left(\log\left(\upperboundHcYao\right)\right)$
	many advice bits. In particular, for $t\in\mathcal{O}(1)$, there exists an $\frac{\ln n}{\ln\ln n}$-competitive algorithm that reads $\mathcal{O}\left(\ln\ln u + \ln n \right)$
	many advice bits.
\end{theorem}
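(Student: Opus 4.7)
\medskip
\noindent\textbf{Proof plan.} The statement is essentially a direct translation of the Yao-style upper bound on $H_c$ (\cref{upperbound_yao}) into the advice-complexity model, using the standard correspondence between $c$-ideal families and $c$-competitive online algorithms with advice established in \cref{advicecomplexity}. So the plan is to first invoke that correspondence and then plug in \cref{upperbound_yao}.

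More concretely, the first step is to recall that, by definition of $H_c$, there is a $c$-ideal hash function family $\mathcal{H}$ of size exactly $H_c$. An online algorithm can then be defined as follows: an oracle, knowing the input $S \in \mathcal{S}_n$ in advance, selects a $c$-ideal $h \in \mathcal H$ for $S$ and writes its index in $\mathcal{H}$ onto the advice tape using $\lceil \log_2 H_c \rceil$ bits; the algorithm reads these bits, reconstructs $h$, and maps every key it encounters according to $h$. This yields a $c$-competitive algorithm with advice complexity $\lceil \log_2 H_c \rceil$.

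The second step is to substitute the bound $H_c \in \mathcal{O}(\ln|\mathcal{S}_n|/\ln t)$ from \cref{upperbound_yao}, which holds whenever $c \in \omega(t \tfrac{\ln n}{\ln\ln n})$, yielding the first claim
\[ \lceil \log_2 H_c \rceil \in \mathcal{O}\!\left(\log\!\left(\frac{\ln|\mathcal{S}_n|}{\ln t}\right)\right).\]
For the ``in particular'' part, we specialize to $t \in \mathcal{O}(1)$, for which $c \in \omega(\tfrac{\ln n}{\ln\ln n})$ suffices and $\ln t$ is a positive constant. Using the crude estimate $|\mathcal{S}_n| = \binom{u}{n} \le u^n$, we get $\ln|\mathcal{S}_n| \le n\ln u$, whence $\log(\ln|\mathcal{S}_n|/\ln t) \le \log n + \log\ln u + \mathcal{O}(1) \in \mathcal{O}(\ln n + \ln\ln u)$, as claimed.

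The main (and only mildly nontrivial) point is simply to remember that the advice model allows arbitrary encoding of the index into a fixed family, so the number of advice bits really is just $\lceil \log_2 H_c \rceil$; there is no further obstacle since the heavy lifting is done by \cref{upperbound_yao}. Everything else is a routine logarithm-and-bound computation.
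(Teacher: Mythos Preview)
Your proposal is correct and matches the paper's approach exactly: the paper does not give an explicit proof of this theorem, treating it as an immediate consequence of \cref{upperbound_yao} together with the advice-complexity correspondence from \cref{advicecomplexity}, which is precisely what you spell out. Your derivation of the ``in particular'' clause via $|\mathcal{S}_n|\le u^n$ and constant $\ln t$ is the intended computation.
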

\Cref{cor:advicecideallowerbound} and \cref{thm:advicecidealupperbound} reveal the advice complexity of hashing to be linear in the hash table size $m$. 
While the universe size $u$ still appears in the upper bound, it functions merely as a summand and is mitigated by a double logarithm. 
Unless the key length $\log_2(u)$ is exponentially larger than the hash table size $m$ ,
the universe size cannot significantly affect this general behavior.
The more immediate bounds for the edge cases do not reveal this dominance of the hash table over the universe.
Moreover, changing the two parameters $\alpha$ and $c$ has no discernible effect on the edge case bounds despite the exponential influence on the main bounds.

\section{Conclusion} \label{hash:sec:conclusion}
This paper analyzed hashing from an unusual angle by regarding hashing as an online problem and then studying its advice complexity.
Online problems are usually studied with competitive analysis, which is a worst-case measurement.
As outlined in the introduction, 
it is impossible to prevent a deterministic algorithm 
from incurring the worst-case cost by hashing all appearing 
keys into a single cell of the hash table.
Therefore, randomized algorithms are key to the theory and application of hashing.
In particular the surprising discovery of small universal hashing families gave rise to efficient algorithms with excellent expected cost behavior.
However, from a worst-case perspective, the performance of randomized algorithms is lacking.

This motivated the conceptualization of $c$-ideal hashing families as a generalization of perfect $k$-hashing families to the case where $\alpha>1$.
Our goal was to analyze the trade-off between size and ideality of hashing families since this is directly linked to the competitiveness of online algorithms with advice.
Our bounds generalize results by Fredman and Komlós~\cite{fredman} as well as Naor et al.~\cite{splitters} to the case $\alpha>1$ and $c\ge 1$.

As a first step, 
we proved that balanced hash functions 
are suited best for hashing in the sense that they 
maximize the number of subsets that are hashed $c$-ideally.
Building on this, we applied results by Dubhashi and Ranjan~\cite{negativedependence}
to obtain our main lower bound of (\ref{equation_mainlowerbound}). 
Our second lower bound, (\ref{equation_easylowerbound}), is a straightforward generalization of a direct approach for the special case $c=\alpha=1$ by Mehlhorn~\cite{mehlhorn}.
We used two techniques to find complementing upper bounds.
The first upper bound, (\ref{equation_mainupperbound}), uses a Poissonization method combined with direct calculations and is mainly useful for $c\in o\left(\frac{\ln m}{\ln\ln m}\right)$.
Our second upper bound, (\ref{equation_Yaobound}), relies on a Yao-inspired principle~\cite{Komm} and covers the case $c\in \omega\left(\frac{\ln m}{\ln\ln m}\right)$.

With these results on the size of $c$-ideal hash function families, we discovered that the advice complexity of hashing is linear in the hash table size $m$ and only logarithmic in $n$ and double logarithmic in $u$ (see Schmidt and Siegel \cite{schmidt} for similar results for perfect hashing).
Moreover, the influence of both $\alpha$ and $c$ is exponential in the lower bound. In this sense, by relaxing the pursuit of perfection only slightly, the gain in the decrease of the size of a $c$-ideal hash function family can be exponential.
Furthermore, only $O(\ln\ln u + \ln n)$ advice bits are necessary for deterministic algorithms to catch up with randomized algorithms.

Further research is necessary to close the gap between our upper and lower bounds.
For the edge cases, that is, for $c \ge \log{n}/\log{\log{n}}$, the upper and lower bounds (\ref{equation_Yaobound}) and (\ref{equation_easylowerbound}) differ by a factor of approximately $n\ln m$.
The interesting case for us, however, is $c \le \log{n}/\log{\log{n}}$, which 
is the observed worst-case cost for universal hashing.
Contrasting (\ref{equation_mainlowerbound}) and (\ref{equation_mainupperbound}), 
we note that the difference has two main reasons. First, there is a factor of $n\ln u$ that appears only in the upper bound; this factor stems from the general bounds in \cref{cor:mainbounds}.
Second, the probability $\P(\alpha_\textnormal{max} \le c\alpha)$ is estimated from below in a more direct fashion than from above, leading to a difference between these bounds that increases with growing $c$.
The reason for the more direct approach is the lack of a result similar to \cref{lem:negativeregression}. 

Moreover, it remains an open question whether it is possible to adapt the entropy-related methods in the spirit of Fredman-Komlós and Körner in such a way as to improve our general lower bound (\ref{equation_mainlowerbound}) by accounting for the universe size in a meaningful way.

\bibliographystyle{plainurl}
\bibliography{bib}

\end{document}